
\documentclass[12pt,draftcls,onecolumn]{IEEEtran}




\usepackage{graphicx}
\usepackage{array}
\usepackage{mathrsfs}
\usepackage{amsfonts}
\usepackage{setspace}
\usepackage{amsmath}
\usepackage{amssymb}
\usepackage{mathrsfs}
\input xy
\xyoption{all}

\newtheorem{theorem}{Theorem}
\newtheorem{lemma}{Lemma}

\newtheorem{definition}{Definition}
\newtheorem{corollary}{Corollary}

\newtheorem{remark}{Remark}
\newtheorem{example}{Example}
\newtheorem{problem}{Problem}
\newtheorem{algorithm}{Algorithm}


\begin{document}
%
\title{Technical Report: NUS-ACT-10-001-Ver.3: Synchronized Task
Decomposition for Cooperative Multi-agent Systems\\
Preprint, Accepted by Automatica (entitled as Guaranteed Global
Performance Through Local Coordinations)}
\author{Mohammad~Karimadini,
        and~Hai~Lin,
\thanks{M. Karimadini and H. Lin are both from the Department of Electrical and Computer Engineering, National University of Singapore,
Singapore. Corresponding author, H. Lin {\tt\small elelh@nus.edu.sg}
} }

\maketitle \thispagestyle{empty} \pagestyle{empty}

\begin{abstract}
It is an amazing fact that remarkably complex behaviors could emerge
from a large collection of very rudimentary dynamical agents through
very simple local interactions. However, it still remains elusive on
how to design these local interactions among agents so as to achieve
certain desired collective behaviors. This paper aims to tackle this
challenge and proposes a divide-and-conquer approach to guarantee
specified global behaviors through local coordination and control
design for multi-agent systems. The basic idea is to decompose the
requested global specification into subtasks for each individual
agent. It should be noted that the decomposition is not arbitrary.
The global specification should be decomposed in such a way that the
fulfillment of these subtasks by each individual agent will imply
the satisfaction of the global specification as a team. Formally, a
given global specification can be represented as an automaton $A$,
while a multi-agent system can be captured as a parallel distributed
system. The first question needs to be answered is whether it is
always possible to decompose a given task automaton $A$ into a
finite number of sub-automata $A_i$, where the parallel composition
of these sub-automata $A_i$ is bisimilar to the automaton $A$.
First, it is shown by a counterexample that not all specifications
can be decomposed in this sense. Then, a natural follow-up question
is what the necessary and sufficient conditions should be for the
proposed decomposability of a global specification. The main part of
the paper is set to answer this question. The case of two
cooperative agents is investigated first, and necessary and
sufficient conditions are presented and proven. Later on, the result
is generalized to the case of arbitrary finite number of agents, and
a hierarchical algorithm is proposed, which is shown to be a
sufficient condition. Finally, a cooperative control scenario for a
team of three robots is developed to illustrate the task
decomposition procedure.

\end{abstract}

%
%

\section{Introduction}
Strongly driven by its great potential in both civilian and
industrial applications \cite{Lesser1999, Lima2005}, multi-agent
systems has rapidly emerged as a hot research area at the
intersection of control, communication and computation
\cite{Choi2009, Kazeroon2009, Georgios2009, Ji2009}. The key issue
in multi-agent system design is how to explicitly set the local
interaction rules such that certain desirable global behaviors can
be achieved by the team of cooperative agents\cite{Crespi2008}.
Although it is known that sophisticated collective behaviors could
emerge from a large collection of very elementary agents through
simple local interactions, we still lack knowledge on how to change
these rules to achieve or avoid certain global behaviors. As a
result, it still remains elusive on how to design these local
interaction rules so as to make sure that they, as a group, can
achieve the specified requirements. In addition, we would like to
point out that the desired collective behavior for a group of mobile
agents could be very complicated, and might involve the coordination
of many distributed independent and dependent modules in a parallel
and environmental awareness manner. Hence, it poses new challenges
that go beyond the traditional path planning, output regulation, or
formation control\cite{Tabuada2006, Belta2007, Kloetzer2010,
Georgios2009}.

This paper aims to propose a decomposition approach applicable in
divide-and-conquer design for cooperative multi-agent systems so as
to guarantee the desired global behaviors. The core idea is to
decompose a global specification into sub-specifications for
individual agents, and then design local controllers for each agent
to satisfy these local specifications, respectively. The
decomposition should be done in such a way that the global behavior
is achieved provided that all these sub-specifications are held true
by individual agents. Hence, the global specification is guaranteed
by design. In order to perform this idea, several questions are
required to be answered, such as how to describe the global
specification and subtasks in a succinct and formal way; how to
decompose the global specification; whether it is always possible to
decompose, and if not, what are the necessary and sufficient
conditions for decomposability.

To formally describe the specification, a deterministic finite
automaton is chosen here to represent the global specifications for
multi-agent systems, due to its expressibility for a large class of
tasks\cite{Cassandras2008, Kumar1999}, its similarity to our human
logical commands, and its connection to the temporal logic
specifications\cite{Gotzhein1992, Wolper2002}. Accordingly, we will
focus on the logical behavior of a multi-agent system and model its
collective behavior through parallel composition \cite{Mukund2002}.
It is assumed that each agent is equipped with a local event set,
containing both private events and common events (shared with other
agents). Furthermore, it is assumed that the global task automaton
is defined over the union of all agents' events. Then, the
decomposition problem can be stated as follows. Given the global
desired behavior represented as a deterministic automaton, how to
obtain the local task automata from it such that the composition of
the obtained local task automata is equivalent to the original
global task automaton?

Since the global task automaton is defined over the union of all
agents' event sets, a reasonable way to obtain the local task
automata is through natural projections with respect to each agent's
event set. Namely, the agent will ignore the transitions marked by
the events that are not in its own event set, i.e., blinds to these
moves. The obtained automaton will be a sub-automaton of the global
task automaton by deleting all the moves triggered by blind events
of the agent. Given a task automaton and sets of local events, it is
always feasible to do the projection operation, but the question is
whether the obtained sub-automata preserve the required
specifications in the sense that the fulfillment of each agent with
its corresponding sub-task automaton will imply the satisfaction of
the global specification as a group. Unfortunately, by a simple
counterexample, it can be shown that the answer is not always. Then,
a natural follow-up question is what the necessary and sufficient
condition should be for the proposed decomposability of a global
specification. The main part of the paper is set to answer this
question.

Similar automaton decomposition problem has been studied in the
computer science literature. Roughly speaking, two different classes
of problems have been studied, so far. The first problem is to
design the event distribution so as to make the automaton
decomposable, which is typically studied in the context of
concurrent systems. For example, \cite{Morin1998} characterized the
conditions for decomposition of asynchronous automata in the sense
of isomorphism based on the maximal cliques of the dependency graph.
The isomorphism equivalence used in \cite{Morin1998} is however a
strong condition, in the sense that two isomorphic automata are
bisimilar but not vise versa \cite{Cassandras2008}. In many
applications bisimulation relation suffices to capture the
equivalence relationship. On the other hand, the second class of
problems assumes that the distribution of the global event set is
given and the objective is to find conditions on the automaton such
that it is decomposable. This is usually called synthesis modulo
problem \cite{Mukund2002} that can be investigated under three types
of equivalence: isomorphism, bisimulation and language equivalence.
Bisimulation synthesis modulo for a global automaton was addressed
in \cite{Castellani1999}, by introducing necessary and sufficient
conditions for automaton decomposition based on language product of
the automaton and determinism of its bisimulation quotient.
Obtaining the bisimulation quotient, however, is generally a
difficult task, and the condition on language product relies on
language separability\cite{Willner1991}, which is another form of
decomposability. These problems motivate us to develop new necessary
and sufficient conditions that can characterize the decomposability
based on the investigation of events and strings in the given
automaton.

In this paper, we identify conditions on the global specification
automaton in terms of its private and common events for the proposed
decomposability, which are shown to be necessary and sufficient for
the case of two agents. Later on, the result is generalized to the
case of arbitrary finite number of agents, and a hierarchical
algorithm is proposed, which is shown to be a sufficient condition.
Furthermore, it is shown that if the global task is decomposable,
then designing the local controller for each agent to satisfy its
corresponding sub-task will lead the entire multi-agent system to
achieve the global specification. To illustrate the decomposition
approach, a coordination and control scenario has been developed and
implemented on a team of three robots.

The rest of the paper is organized as follows. Preliminary lemmas,
notations, definitions and problem formulation are represented in
Section \ref{PROBLEM FORMULATION}. Section \ref{TASK DECOMPOSITION
FOR TWO AGENTS} introduces the necessary and sufficient conditions
for decomposition of an automaton with respect to parallel
composition and two local event sets. The algorithm for the
hierarchical extension of the automaton decomposition is given in
Section \ref{HIERARCHICAL DECOMPOSITION}. To illustrate the task
decomposition, an implementation result is given on a cooperative
multi-robot system example in Section \ref{EXAMPLE}. Finally, the
paper concludes with remarks and discussions in Section
\ref{CONCLUSION}. The proofs of lemmas are given in the Appendix.

\section{Problem formulation}\label{PROBLEM FORMULATION}

We first recall the definition of an automaton \cite{Kumar1999}.
\begin{definition}(Automaton)
An automaton is a tuple $A = \left(Q, q_0, E, \delta \right)$
consisting of
\begin{itemize}
\item a set of states $Q$; \item the initial state
$q_0\in Q$; \item a set of events $E$ that causes transitions
between the states, and \item a transition relation $\delta
\subseteq Q \times E\times Q$ such that $(q, e, q^{\prime})\in
\delta$ if and only if $\delta(q, e)=q^{\prime}$ (or
$q\overset{e}{\underset{}\rightarrow } q^{\prime}$ ).
\end{itemize}
In general, automaton has also an argument $Q_m$ which is the set of
final or marked states, the states are marked when it is desired to
attach a special meaning to them, such as accomplishing. This
argument is dropped when the notion of marked state is not of
interest or it is clear from the content.
\end{definition}
As $\delta$ is a partial map from $Q \times E$ into $Q$, in general,
not all states are reachable from the initial state. The accessible
portion of the automaton is defined as
\begin{definition}(Accessible Operator; \cite{Cassandras2008})\label{Accessible Operator}
Consider an automaton $A = \left(Q, q_0, E, \delta \right)$. The
operator $Ac(.)$ excludes the states and their attached transitions
that are not reachable from the initial state, and is defined as
$Ac(A) = \left(Q_{ac}, q_0, E, \delta_{ac} \right)$ with
$Q_{ac}=\{q\in Q|\exists s\in E^*, \delta (q_0, s)=q\}$ and
$\delta_{ac}=\delta|Q_{ac}\times E\rightarrow Q_{ac}$, restricting
$\delta$ to the smaller domain of $Q_{ac}$. As $AC(.)$ has  no
effect on the behavior of the automaton, from now on we assume $A =
Ac(A)$.
\end{definition}

 The transition relation can be extended to a finite string of
events, $s\in E^*$, where $E^*$ stands for $Kleene-Closure$ of $E$
(the set of all finite strings over elements of $E$), as follows
$\delta(q,\varepsilon)=q$, and $\delta(q,se)=\delta(\delta(q,s),e)$
for $s\in E^*$ and $e\in E$. We focus on deterministic task automata
that are simpler to be characterized, and cover a wide class of
specifications. The qualitative behavior of a deterministic discrete
event system (DES) is described by the set of all possible sequences
of events starting from initial state. Each such a sequence is
called a string, and the collection of strings represents the
language generated by the automaton, denoted by $L(A)$. Given a
language $L$, $\overline{L}\subseteq E^*$ is the prefix-closure of
$L$ defined as $\overline{L} = \{s\in E^*|\exists t\in E^*, st\in
L\}$, consisting of all prefixes of all the strings in $L$. The
existence of a transition over string $s\in E^*$ from a state $q\in
Q$, is denoted by $\delta(q, s)!$, and considering a language $L$,
by $\delta(q, L)!$ we mean $\forall \omega \in L: \delta(q,
\omega)!$.

To describe the decomposability condition in the main result and
during the proofs, we define successive event pair and adjacent
event pair as follows.
\begin{definition}(Successive event pair)
Two events $e_1$ and $e_2$ are called successive events if $\exists
q\in Q: \delta(q,e_1)! \wedge \delta(\delta(q,e_1),e_2)!$ or
$\delta(q,e_2)! \wedge \delta(\delta(q,e_2),e_1)!$.
\end{definition}
\begin{definition}(Adjacent event pair)
Two events $e_1$ and $e_2$ are called adjacent events if $\exists
q\in Q:\delta(q,e_1)! \wedge \delta(q,e_2)!$.
\end{definition}

To compare the task automaton and its decomposed automata, we use
the simulation and bisimulation relations \cite{Cassandras2008}.
\begin{definition}(Simulation and Bisimulation)
\label{simulation} Consider two automata $A_i=( Q_i, q_i^0$, $E,
\delta _i)$, $i=1, 2$. A relation $R\subseteq Q_1 \times Q_2$ is
said to be a simulation relation from $A_1$ to $A_2$ if
\begin{enumerate}
    \item
$(q_1^0, q_2^0) \in R$
    \item
$\forall\left( {q_1 ,q_2 } \right) \in R, \delta_1(q_1, e)= q'_1$,
then $\exists q_2^{\prime}\in Q_2$ such that $\delta_2(q_2, e)=q'_2,
\left( {q'_1 ,q'_2 } \right) \in R$.
\end{enumerate}

If $R$ is defined for all states and all events in $A_1$, then $A_1$
is said to be similar to $A_2$ (or $A_2$ simulates $A_1$), denoted
by $A_1\prec A_2$ \cite{Cassandras2008}.

If $A_1\prec A_2$, $A_2\prec A_1$ and $R$ is symmetric then $A_1$
and $A_2$ are said to be bisimilar (bisimulate each other), denoted
by $A_1\cong A_2$ \cite{Zhou2006}.

In general, bisimilarity implies languages equivalence but the
converse does not necessarily hold true \cite{Alur2000}.
\end{definition}

\begin{definition} (Isomorphism, \cite{Lawson07})\label{Isomorphism}
Isomorphism is one of the strongest equivalences relations between
automata. Two automata $A_i=( Q_i, q_i^0, E, \delta _i)$, $i=1, 2$,
are said to be isomorphic, if there exists an isomorphism $\theta$
from $A_1$ to $A_2$ defined as a bijective function $\theta: Q_1\to
Q_2$ such that $\theta(q_1^0) = q_2^0$, and $\theta(\delta_1(q, e))
= \delta_2(\theta(q), e)$, $\forall q\in Q_1, e\in E$.

By this definition, two isomorphic automata are bisimilar, but
bisimilar automata are not necessarily isomorphic (see Example
\ref{bisimilarityEX}).
\end{definition}

In this paper, we assume that the task automaton $A_S$ and the sets
of local events $E_i$ are all given. It is further assumed that
$A_S$ is deterministic  while its event set $E$ is obtained by the
union of local event sets, i.e., $E=\cup_i E_i$. The problem is to
check whether the task automaton $A_S$ can be decomposed into
sub-automata $A_{S_i}$ on the local event sets $E_i$, respectively,
such that the collection of these sub-automata $A_{S_i}$ is somehow
equivalent  to $A_S$. The equivalence is in the sense of
bisimilarity as defined above, while the composition process for
these sub-automata $A_{S_i}$ could be in the usual sense of parallel
composition as defined below. Parallel composition is used to model
the interactions between automata and represent the logical behavior
of multi-agent systems. Parallel composition is formally defined as
\begin{definition} (Parallel Composition \cite{Kumar1999}) \label{parallel
composition}Let $A_i=\left( Q_i,q_i^0,E_i,\delta _i\right)$,
$i=1,2$, be automata. The parallel composition (synchronous
composition) of $A_1$ and $A_2$ is the automaton $A_1||A_2=\left(Q =
Q_1 \times Q_2, q_0 = (q_1^0, q_2^0), E = E_1 \cup E_2,
\delta\right)$, with $\delta$ defined as\\
   $\forall (q_1, q_2)\in Q, e\in E:\delta(\left(q_1, q_2), e\right)=
    \left\{
\begin{array}{ll}
    \left(\delta_1(q_1, e), \delta_2(q_2, e)\right), & \hbox{if $\left\{\begin{array}{ll}
    \delta _1(q_1,e)!, \delta _2(q_2,e)!, \\
     e\in E_1 \cap E_2
\end{array}\right.$};\\
    \left(\delta_1(q_1, e), q_2\right), & \hbox{if $\delta _1(q_1,e)!, e\in E_1 \backslash E_2$;} \\
    \left(q_1, \delta_2(q_2, e)\right), & \hbox{if $\delta _2(q_2,e)!, e\in E_2 \backslash E_1$;} \\
    \hbox{undefined,} & \hbox{otherwise} \\
\end{array}\right.$.

The parallel composition of $A_i$, $i=1,2,...,n$ is called parallel
distributed system, and is defined based on the associativity
property of parallel composition \cite{Cassandras2008} as
$\overset{n}{\underset{i=1}{\parallel\ } }A_i=A_1\parallel\
...\parallel\   A_n=A_1\parallel \left(A_2
\parallel \left( \cdots \parallel \left( A_{n-1}\parallel
A_n \right)\right)\right)$.
\end{definition}

A reasonable guess for task automaton decomposition is to use
natural projections with respect to agents' event set. Natural
projection over strings is denoted by $p_{E_i}=p  _i:E^*\rightarrow
E_i^*$, takes a string from the event set $E$ and eliminates events
in it that do not belong to the event set $E_i\subseteq E$. The
natural projection is formally defined on the strings as
\begin{definition}(Natural Projection on String, \cite{Cassandras2008})
Consider a global event set $E$ and its local event sets $E_i$,
$i=1,2,...,n$, with $E=\overset{n}{\underset{i=1}{\cup} } E_i$.
Then, the natural projection $p_i:E^*\rightarrow
E_i^*$ is inductively defined as \\
 $p_i(\varepsilon)=\varepsilon$; \\
 $\forall s\in E^*, e\in E:p_i(se)=\left\{
  \begin{array}{ll}
  p_i(s)e & \hbox{if $e\in E_i$;} \\
  p_i(s) & \hbox{otherwise.}
  \end{array}
\right.$\\
\end{definition}

The natural projection is also defined on automata as
$P_i(A_S):A_S\rightarrow A_{S_i}$, where, $A_{S_i}$ are obtained
from $A_S$ by replacing its events that belong to $E\backslash E_i$
by $\tau$-moves (representing silent or unobservable transitions),
and then, merging the $\tau$-related states. The  $\tau$-related
states form equivalent classes defined as follows.
\begin{definition}(Equivalent class of states, \cite{Morin1998})\label{Equivalent class of states}
Consider an automaton $A_S=(Q, q_0, E, \delta)$ and local event sets
$E_i$, $i=1,2,...,n$, with $E=\overset{n}{\underset{i=1}{\cup} }
E_i$. Then, the relation $\sim E_i$ (or $\sim_i$) is the least
equivalence relation on the set $Q$ of states such that
$\delta(q,e)=q^{\prime}\wedge e\notin E_i\Rightarrow
q\sim_{E_i}q^{\prime}$, and $[q]_{E_i}=[q]_i$ denotes the
equivalence class of $q$ defined on $\sim E_i$. In this case, $q$
and $q^{\prime}$ are said to be $\tau$-related.
\end{definition}
The natural projection is then formally defined on an automaton as
follows.
\begin{definition}(Natural Projection on Automaton)\label{Natural Projection on Automaton}
Consider an automaton $A_S=(Q, q_0, E, \delta)$ and local event sets
$E_i$, $i=1,2,...,n$, with $E=\overset{n}{\underset{i=1}{\cup} }
E_i$. Then, $P_i(A_S)=(Q_i=Q_{/\sim E_i}, [q_0]_{E_i}, E_i,
\delta_i)$, with $\delta_i([q]_{E_i}, e)=[q^{\prime}]_{E_i}$ if
there are states $q_1$ and $q_1^{\prime}$ such that
$q_1\sim_{E_i}q$, $q_1^{\prime}\sim_{E_i}q^{\prime}$, and
$\delta(q_1, e)=q^{\prime}_1$.
\end{definition}

The following example elaborates the concept of natural projection
on a given automaton.

\begin{example}\label{Natural Projection Example}
Consider an automaton $A_S$: \xymatrix@C=0.5cm{
     \ar[r]&  \bullet \ar[r]^{a}\ar[d]^{e_2} &  \bullet
     \ar[d]^{e_1}&
     \\
     & \bullet \ar[r]^{e_4} &  \bullet \ar[r]^{b}&\bullet
       }
with the event set $E=E_1\cup E_2$ and local event sets $E_1=\{a, b,
e_1\}$, $E_2=\{a, b, e_2, e_4\}$. The natural projections of $A_S$
into $E_1$ is obtained as $P_1(A_S)$: \xymatrix@C=0.5cm{
     \bullet &  \check{\bullet} \ar@/_/[r]_{a}\ar[l]_{b} &  \bullet \ar[l]_{e_1}
         }
by replacing $\{e_2, e_4\}\in E \backslash E_1$ with $\tau$ and
merging the $\tau$-related states. Similarly, the projection
$P_2(A_S)$ is obtained as $P_2(A_S)$:
 \xymatrix@C=0.5cm{
     \ar[r]&  \bullet \ar[r]^{e_2} \ar@/_/[rr]_{a}&  \bullet \ar[r]^{e_4}&
     \bullet \ar[r]^{b}&
     \bullet}.
\end{example}

To investigate the interactions of transitions in two automata,
particularly in $P_1(A_S)$ and $P_2(A_S)$, the interleaving of
strings is defined, based on the path automaton as follows.
\begin{definition}(Path Automaton)\label{Path Automaton}
A sequence $q_1\overset{e_1}{\underset{}\rightarrow }
q_2\overset{e_2}{\underset{}\rightarrow
}...\overset{e_n}{\underset{}\rightarrow }q_n$ is called a path
automaton, characterized by its initial state $q_1$ and string $s =
e_1e_2...e_n$, denoted by $PA(q_1, s)$, and defined as $PA(q_1, s)=
(\{q_1,...,q_n\}, \{q_1\}$, $\{e_1,...,e_n\}$,$ \delta_{PA})$, with
$\delta_{PA}(q_i, e_i) = q_{i+1}$, $i = 1,...,n-1$.
$PA(q_1^{\prime}, s^{\prime})$ is defined, similarly.
\end{definition}

\begin{definition}(Interleaving)\label{Interleaving}
Given two sequences $q_1\overset{e_1}{\underset{}\rightarrow }
q_2\overset{e_2}{\underset{}\rightarrow
}...\overset{e_n}{\underset{}\rightarrow }q_n$ and
$q_1^{\prime}\overset{e_1^{\prime}}{\underset{}\rightarrow }
q_2^{\prime}\overset{e_2^{\prime}}{\underset{}\rightarrow
}...\overset{e_m^{\prime}}{\underset{}\rightarrow }q_m^{\prime}$,
the interleaving of their corresponding strings, $s = e_1e_2...e_n$
and $s^{\prime} = e_1^{\prime}e_2^{\prime}...e_m^{\prime}$, is
denoted by $s|s^{\prime}$, and defined as $s|s^{\prime}= L\{PA(q_1,
s)||PA(q_1^{\prime}, s^{\prime})\}$.
\end{definition}

\begin{example}
Consider three strings $s_1 = e_1a$, $s_2 = ae_2$ and $s_3 = ae_1$.
Then the interleaving of $s_1$ and $s_2$ is $s_1|s_2 =
\overline{e_1ae_2}$ while the interleaving of two strings $s_2$ and
$s_3$ becomes $s_2|s_3 = \overline{\{ae_1e_2, ae_2e_1\}}$.
\end{example}

Based on these definitions, we are now ready to formally define the
decomposability of an automaton with respect to parallel composition
and natural projections as follows.

\begin{definition}(Automaton decomposability, or bisimulation synthesis modulo \cite{Mukund2002})
 A task automaton $A_S$ with the event set $E$ and local event sets
 $E_i$, $i=1,..., n$, $E = \overset{n}{\underset{i=1}{\cup} } E_i$, is
said to be decomposable with respect to parallel composition and
natural projections $P_i: A_S\rightarrow P_i \left( A_S \right)$,
$i=1,\cdots, n$, if $\overset{n}{\underset{i=1}{\parallel} } P_i
\left( A_S \right) \cong A_S$.
\end{definition}

Now, let us see a motivating example to illustrate the decomposition
procedure.
\begin{example}\label{bisimilarityEX}
Consider the task automaton $A_S$ and its local event sets in
Example \ref{Natural Projection Example}. This automaton is
decomposable with respect to parallel composition and natural
projections, since $A_S\cong P_1(A_S)||P_2(A_S)$, leading to
$L(A_S)=L(P_1(A_S)||P_2(A_S))=\{\varepsilon, a, a e_1, ae_1b, e_2,
e_2e_4, e_2e_4b\}$.

Two automata\\
 \xymatrix@R=0.1cm{
                &    &     *+[o][F]{q_1}  \ar[dr]^{e_2} &   \\
\ar[r]&  *+[o][F]{q_0} \ar[ur]^{e_1} \ar[dr]_{e_2} & &  *+[o][F]{q_2}t  \ar[r]^{a}& *+[o][F]{q_3}  \\
              &   & *+[o][F]{q_4} \ar[ur]_{e_1}&
                }  and
 \xymatrix@R=0.1cm{
                &    & *+[o][F]{q_1^{\prime}} \ar[r]^{e_2}  & *+[o][F]{q_2^{\prime}}\ar[r]^{a}& *+[o][F]{q_3^{\prime}}  \\
\ar[r]&  *+[o][F]{q_0^{\prime}} \ar[ur]^{e_1} \ar[dr]_{e_2} & &  &               \\
             &    & *+[o][F]{q_4^{\prime}}  \ar[r]_{e_1} & *+[o][F]{q_5^{\prime}}\ar[r]_{a}& *+[o][F]{q_6^{\prime}}
                }  \\ with $E=E_1\cup E_2$, $E_1=\{a, e_1\}$, $E_2=\{a, e_2\}$ are other examples of decomposable automata.

Note that, these two automata are bisimilar (with the bisimulation
relation $R = \{(q_0, q_0^{\prime}),(q_1, q_1^{\prime}),\\(q_2,
q_2^{\prime}),(q_3, q_3^{\prime}),(q_4, q_4^{\prime}),(q_2,
q_5^{\prime}),(q_3, q_6^{\prime})\}$), but not isomorphic. The first
one is concrete (decomposable in the sense of isomorphism;
satisfying $FD$ (forward diamond) and $ID$ (independent diamond)) as
well as decomposable (in the sense of bisimulation). The latter
automaton, on the other hand, is decomposable, but not concrete.
\end{example}

Examples \ref{bisimilarityEX} shows a decomposable task automaton;
however, to deal with the top - down design we need to understand
whether any task automaton is decomposable or not.
\begin{problem}\label{solvability}
Given a deterministic task automaton $A_S$ and local event sets
$E_i$, $i=1,\cdots, n$, is it always possible to decompose $A_S$
with respect to parallel composition and natural projections $P_i$,
$i=1,\cdots, n$?
\end{problem}

To answer this question, we examine the following example.

\begin{example}\label{SimilarityEX}
Consider an automaton $A_S$: \\ \xymatrix@C=0.5cm{
     \ar[r]&  \bullet \ar[r]^{e_1} &  \bullet \ar[r]^{e_2} &  \bullet
            }, with local event sets $E_1=\{e_1\}$ and $E_2=\{e_2\}$.
             The parallel composition of $P_1(A_S):  \xymatrix@C=0.5cm{
     \ar[r]&  \bullet \ar[r]^{e_1}&  \bullet
            }$
            and $P_2(A_S):  \xymatrix@C=0.5cm{
     \ar[r]&  \bullet \ar[r]^{e_2}&  \bullet
            }$ is
     $P_1(A_S)||P_2(A_S)$:  \xymatrix@R=0.1cm{
                \ar[r]&\bullet \ar[r]^{e_1}\ar[dr]_{e_2}&\bullet   \ar[r]^{e_2}&\bullet \\
             && \bullet \ar[ur]_{e_1} }.

One can observe that  $A_S\prec P_1(A_S)||P_2(A_S)$ but $
P_1(A_S)||P_2(A_S)\nprec A_S$ leading to $L(A_S)=\{\varepsilon, e_1,
e_1e_2\}\subseteq L(P_1(A_S)||P_2(A_S))=\{\varepsilon, e_1, e_1e_2,
e_2, e_2e_1\}$, but $L(P_1(A_S)||P_2(A_S))\nsubseteq L(A_S)$.
Therefore, $A_S$ is not decomposable with respect to parallel
composition and natural projections $P_i$, $i=1,2$.
\end{example}

Therefore, not all automata are decomposable with respect to
parallel composition and natural projections. Then, a natural
follow-up question is what makes an automaton decomposable. It can
be formally stated as follows.

\begin{problem}\label{solution} Given a deterministic task automaton $A_S$
and local event sets $E_i$, $i=1,\cdots, n$, what are the necessary
and sufficient conditions that $A_S$ is decomposable with respect to
parallel composition and natural projections $P_i: A_S\rightarrow
P_i \left( A_S \right)$, $i=1,\cdots, n$, such that
$\overset{n}{\underset{i=1}{\parallel} } P_i \left( A_S \right)
\cong A_S$?
\end{problem}

This problem will be addressed in the following two sections. In the
next section, we will deal with the case of two agents and present
necessary and sufficient conditions for decomposability. Then, the
result is generalized in Section \ref{HIERARCHICAL DECOMPOSITION} to
a finite number of agents, as a sufficient condition.

\section{Task decomposition for two agents}\label{TASK DECOMPOSITION FOR TWO AGENTS}

%
%
%
%

In order for $A_S\cong P_1(A_S)||P_2(A_S)$, from the definition of
bisimulation, it is required to have $A_S\prec P_1(A_S)||P_2(A_S)$;
$P_1(A_S)||P_2(A_S)\prec A_S$, and the simulation relations are
symmetric. These requirements are provided by the following three
lemmas. Firstly, the following simulation relationship always holds
true.
\begin{lemma}\label{similarity relation of parallel}
Consider any deterministic automaton $A_S$ with event set $E =
E_1\cup E_2$, local event sets $E_i$, and natural projections $P_i$,
$i=1,2$. Then $A_S \prec P_1(A_S)||P_2(A_S)$.
\end{lemma}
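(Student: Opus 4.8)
The plan is to construct an explicit simulation relation from $A_S$ to $P_1(A_S)||P_2(A_S)$ and check the two clauses of Definition~\ref{simulation} directly. Guided by the way each projected automaton is built from equivalence classes of $A_S$-states, the natural candidate is
\[
R = \{(q, ([q]_1, [q]_2)) : q \in Q\} \subseteq Q \times (Q_1 \times Q_2),
\]
pairing each state $q$ with the pair of its $\sim_{E_1}$- and $\sim_{E_2}$-classes. The initial-state clause $(q_0, ([q_0]_1, [q_0]_2)) \in R$ holds by construction, since $([q_0]_1, [q_0]_2)$ is exactly the initial state of $P_1(A_S)||P_2(A_S)$.

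First I would verify the transition-matching clause: given $(q, ([q]_1, [q]_2)) \in R$ and a transition $\delta(q, e) = q'$ in $A_S$, I must exhibit a matching move of the parallel composition that lands in the $R$-image of $q'$, namely $([q']_1, [q']_2)$. The argument proceeds by the three cases of Definition~\ref{parallel composition} according to how $e$ is shared. In every case the key sub-fact is that $\delta(q,e)=q'$ with $e \in E_i$ yields $\delta_i([q]_i, e) = [q']_i$, which follows from Definition~\ref{Natural Projection on Automaton} by choosing the witnesses $q_1 = q$ and $q_1' = q'$. Thus if $e \in E_1 \cap E_2$, both components advance and synchronize on $e$, reaching $([q']_1, [q']_2)$ as required.

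The step needing care is the private-event case, say $e \in E_1 \setminus E_2$ (the case $e \in E_2 \setminus E_1$ being symmetric), in which the parallel composition advances the first component while freezing the second. The first component behaves as above, giving $\delta_1([q]_1, e) = [q']_1$. For the frozen second component I need it to already sit at the correct target class, i.e. $[q]_2 = [q']_2$; this is exactly where the definition of $\sim_{E_2}$ is used, since $e \notin E_2$ together with $\delta(q,e)=q'$ forces $q \sim_{E_2} q'$ by Definition~\ref{Equivalent class of states}. Hence the move $\delta(([q]_1, [q]_2), e) = ([q']_1, [q]_2) = ([q']_1, [q']_2)$ is defined and lands in the $R$-image of $q'$.

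I do not anticipate a genuine obstacle: this is the universally-valid direction of the bisimilarity, requiring no decomposability hypothesis, and $R$ is defined on all of $Q$ (every state being reachable since $A_S = Ac(A_S)$), so it qualifies as a simulation in the sense of Definition~\ref{simulation}. The only point worth stating explicitly, rather than a real difficulty, is that the projected automata may be nondeterministic, but simulation only demands the \emph{existence} of a matching move, which the projection definition supplies. The converse simulation $P_1(A_S)||P_2(A_S) \prec A_S$, where decomposability conditions become indispensable, is deferred to the following lemmas.
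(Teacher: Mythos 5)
Your proposal is correct and follows essentially the same route as the paper: the same relation $R=\{(q,([q]_1,[q]_2))\}$ (the paper writes it with an explicit reachability clause, which coincides with yours since $A_S=Ac(A_S)$), the same use of Definition~\ref{Natural Projection on Automaton} to advance each component on events in $E_i$, and the same use of $\sim_{E_i}$ to keep the frozen component at the correct class on private events. No gaps; your explicit treatment of the private-event case merely spells out what the paper's displayed case distinction for $[q']_i$ asserts.
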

\begin{proof}
See the Appendix for proof.
\end{proof}

This lemma shows that, in general, $P_1(A_S)||P_2(A_S)$ simulates
$A_S$. The similarity of $P_1(A_S)||P_2(A_S)$ to $A_S$, however, is
not always true (see Example \ref{SimilarityEX}), and needs some
conditions as stated in the following lemma.

\begin{lemma} \label{reverse similarity relation of parallel}
 Consider a deterministic  automaton $A_S =(Q, q_0, E=E_1\cup E_2, \delta)$
and natural projections $P_i:A_S\to P_i(A_S)$, $i=1,2$. Then,
$P_1(A_S)||P_2(A_S)\prec A_S$ if and only if $A_S$ satisfies the
following conditions: $\forall e_1 \in E_1\backslash E_2, e_2 \in
E_2\backslash E_1, q\in Q$, $s\in E^*$:
\begin{itemize}\item $DC1: [\delta(q,e_1)!\wedge
\delta(q,e_2)!]\Rightarrow [\delta(q, e_1e_2)! \wedge \delta(q,
e_2e_1)!]$;
\item $DC2: \delta(q, e_1e_2s)!\Leftrightarrow \delta(q, e_2e_1s)!$, and
 \item $DC3:
\forall s, s^{\prime} \in E^*$, sharing the same first appearing
common event $a\in E_1 \cap E_2$, $s\neq s^{\prime}$, $q\in Q$:
$\delta(q, s)! \wedge \delta(q, s^{\prime})! \Rightarrow \delta(q,
p_1(s)|p_2(s^{\prime}))! \wedge \delta(q, p_1(s^{\prime})|p_2(s))!$.
\end{itemize}
\end{lemma}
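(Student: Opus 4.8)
The plan is to reduce the simulation statement to a purely language-theoretic containment and then to characterize that containment by DC1, DC2 and DC3. Because $A_S$ is deterministic, a standard fact gives that for any (possibly nondeterministic) automaton $B$ over $E$ one has $B\prec A_S$ if and only if $L(B)\subseteq L(A_S)$: the ``only if'' direction is immediate since a simulation collapses to language inclusion, and for the converse one takes the relation pairing each state of $B$ reachable by a string $w$ with the unique state $\delta(q_0,w)$ of $A_S$, and checks, using determinism of $A_S$ and prefix-closedness of $L(B)$, that it is a simulation. Applying this with $B=P_1(A_S)||P_2(A_S)$, together with the standard identities $L(P_i(A_S))=p_i(L(A_S))$ and $L(P_1(A_S)||P_2(A_S))=\{t\in E^*:\ p_1(t)\in p_1(L(A_S))\ \wedge\ p_2(t)\in p_2(L(A_S))\}$, the lemma becomes the assertion that $L(P_1(A_S)||P_2(A_S))\subseteq L(A_S)$ holds if and only if $A_S$ satisfies DC1, DC2 and DC3.

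For \emph{necessity}, I would assume the containment and fix, for a given state $q$, a witness $w$ with $\delta(q_0,w)=q$ (legitimate since $A_S=Ac(A_S)$). For DC1 and DC2 the key observation is that reordering private events of different types, or inserting a missing private event as above, leaves both projections $p_1$ and $p_2$ unchanged; hence the modified string shares its projections with a string of $L(A_S)$, lies in $L(P_1(A_S)||P_2(A_S))$, and therefore in $L(A_S)$, which produces the required transitions. For DC3, given $s\neq s'$ enabled at $q$ and sharing their first common event, and any $\omega$ in the relevant interleaving, I would use $p_1(w\omega)=p_1(ws)\in p_1(L(A_S))$ and $p_2(w\omega)=p_2(ws')\in p_2(L(A_S))$ to place $w\omega$ in $L(P_1(A_S)||P_2(A_S))\subseteq L(A_S)$, giving $\delta(q,\omega)!$; the symmetric interleaving is handled identically.

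For \emph{sufficiency}, the core technical step is a commutation lemma extracted from DC1 and DC2: if a string of private events is enabled at a state, then every reordering that preserves the relative order of the $(E_1\backslash E_2)$--events among themselves and of the $(E_2\backslash E_1)$--events among themselves is also enabled. This is proved by writing any such reordering as a finite sequence of adjacent transpositions of an $(E_1\backslash E_2)$--event with an $(E_2\backslash E_1)$--event, each of which preserves feasibility by DC2 applied at the intervening state, while DC1 supplies the base ``diamond'' that makes a transposition available in the first place. A short nested induction then upgrades this to a \emph{private shuffle} property: if $\alpha\in(E_1\backslash E_2)^*$ and $\beta\in(E_2\backslash E_1)^*$ are both enabled at $q$, then every string of $\alpha|\beta$ is enabled at $q$, the inductive step pushing the leading event of $\alpha$ or of $\beta$ across the other string by DC1 and DC2.

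With these tools the main argument runs as follows. Take any $t\in L(P_1(A_S)||P_2(A_S))$ and choose witnesses $u,v\in L(A_S)$ with $p_1(u)=p_1(t)$ and $p_2(v)=p_2(t)$; note that $t$ is itself a synchronized interleaving of its own projections, so $t\in p_1(t)|p_2(t)=p_1(u)|p_2(v)$. If $t$ contains a common event, then since common events are retained by both $p_1$ and $p_2$ the first common event of $u$ and of $v$ both coincide with the first common event of $t$; hence for $u\neq v$ condition DC3 yields $\delta(q_0,p_1(u)|p_2(v))!$ and therefore $\delta(q_0,t)!$. The degenerate case $u=v$, and likewise the case where $t$ has no common event at all, is settled by the commutation and shuffle lemma, since there $t$ is obtained purely by interleaving private events, either out of the single string $u$ or out of the separately enabled projections $\delta(q_0,p_1(t))!$ and $\delta(q_0,p_2(t))!$. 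I expect the main obstacle to be precisely this commutation machinery: establishing it rigorously from DC1 and DC2, with the correct bookkeeping of intermediate states in the transpositions and in the nested inductions, is the delicate part, and it is exactly the ingredient whose failure is exhibited by Example~\ref{SimilarityEX}, where DC2 breaks and the forbidden reordering $e_2e_1$ leaks into $L(P_1(A_S)||P_2(A_S))$.
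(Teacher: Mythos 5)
Your argument is correct in substance and follows the same underlying strategy as the paper, but it is organized differently enough to merit a comparison. The paper proves sufficiency by constructing explicit simulation relations: it decomposes every string into alternating private and common blocks $\omega^k\gamma^k$, tracks the joint evolution of the interleaving in $P_1(A_S)||P_2(A_S)$ against $A_S$ (Lemmas \ref{Gamma1} and \ref{Gamma2}), and then partitions the interleavings into three classes according to whether the two witness strings coincide, fail to share their first common event, or share it. You instead reduce the whole statement to the language inclusion $L(P_1(A_S)||P_2(A_S))\subseteq L(A_S)$ via determinism of $A_S$, and you replace the paper's index-heavy block bookkeeping by a commutation lemma (adjacent transpositions of private events of different types, which is really DC2 alone) plus a private-shuffle lemma (which is where DC1 enters). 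Your observation that any two witnesses $u,v$ of the same product string automatically carry identical common-event subsequences is a genuine simplification: it makes the paper's middle case ($\Gamma_2$ with distinct first common events) vacuous for actual witnesses, collapses the case split to two, and lets DC3 be invoked verbatim rather than ``in combination with Lemma \ref{Gamma1}''. The necessity direction is essentially identical in both proofs: a violation of any condition produces a string whose two projections are separately realizable, hence a string of the product that $A_S$ cannot match, exactly as in Example \ref{SimilarityEX}.

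Two caveats. First, the identity $L(P_i(A_S))=p_i(L(A_S))$ is not automatic for the quotient construction of Definition \ref{Natural Projection on Automaton}: since $\sim_{E_i}$ is the \emph{symmetric} closure of the $\tau$-step relation, merging $\tau$-related states can in principle create strings of $P_i(A_S)$ that are not projections of any string of $A_S$, and your sufficiency direction needs the inclusion $L(P_i(A_S))\subseteq p_i(L(A_S))$ to extract the witnesses $u,v$. The paper incurs exactly the same debt when it describes every transition of $P_1(A_S)||P_2(A_S)$ as an interleaving of projections of transitions of $A_S$, so this is a shared tacit assumption rather than a new gap, but you should flag it rather than call it standard. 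Second, in the case where $t$ contains no common event you assert $\delta(q_0,p_1(t))!$ as if the projections were separately enabled; this is not given, since the witness $u$ interleaves $p_1(t)$ with events of $E_2\backslash E_1$. It does follow from your own commutation lemma (reorder $u$ so that all $E_1\backslash E_2$ events come first, then take the prefix $p_1(u)$), but that derivation should be made explicit, as it is precisely the step your machinery exists to justify.
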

\begin{proof}
See the Appendix for proof.
\end{proof}

Next, we need to show that the two simulation relations $R_1$ (for
$A_S \prec P_1(A_S)||P_2(A_S)$) and $R_2$ (for
$P_1(A_S)||P_2(A_S)\prec A_S$), defined by the above two lemmas, are
symmetric.

\begin{lemma}\label{symmetric}
Consider an automaton $A_S=(Q, q_0, E=E_1\cup E_2, \delta)$ with
natural projections $P_i:A_S\to P_i(A_S)$, $i=1,2$. If $A_S$ is
deterministic, $A_S\prec P_1(A_S)||P_2(A_S)$ with the simulation
relation $R_1$ and $P_1(A_S)||P_2(A_S)\prec A_S$ with the simulation
relation $R_2$, then $R^{-1}_1 = R_2$ (i.e., $\forall q\in Q$, $z\in
Z$: $(z, q)\in R_2 \Leftrightarrow (q, z)\in R_1$) if and only if
$DC4$: $\forall i\in\{1, 2\}$, $x, x_1, x_2 \in Q_i$, $x_1\neq x_2$,
$e\in E_i$, $t\in E_i^*$, $\delta_i (x, e)=  x_1$, $\delta_i (x, e)=
x_2$: $\delta_i (x_1, t)! \Leftrightarrow \delta_i(x_2, t)!$.
\end{lemma}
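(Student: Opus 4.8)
The plan is to work directly with the explicit simulation relations furnished by Lemmas~\ref{similarity relation of parallel} and~\ref{reverse similarity relation of parallel}. Writing $Z$ for the reachable state set of $P_1(A_S)||P_2(A_S)$, the relation $R_1$ sends each $q\in Q$ to the composite state $([q]_1,[q]_2)\in Z$, while $R_2$ relates a reachable composite state $z$ to the states of $A_S$ whose $\sim_1$- and $\sim_2$-classes are the two components of $z$ and which reflect the outgoing moves of $z$. The first step is to record that both $R_1^{-1}$ and $R_2$ are contained in the class-respecting relation $\{(z,q): z=([q]_1,[q]_2)\}$, so that $R_2\subseteq R_1^{-1}$ holds by construction and the whole content of the asserted equality lies in the reverse inclusion $R_1^{-1}\subseteq R_2$. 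Thus I would reduce the lemma to the equivalence: \emph{for every $q$ the pair $\big(([q]_1,[q]_2),q\big)$ satisfies the $R_2$ simulation clause} $\Leftrightarrow$ $DC4$.

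For the ``if'' direction I would assume $DC4$ and check the simulation clause for a pair $(z,q)\in R_1^{-1}$ with $z=([q]_1,[q]_2)$. Fixing an outgoing transition $z\overset{e}{\rightarrow}z'$ of the product, the parallel composition rule shows it arises from moves of the projections, and determinism of $A_S$ yields a unique $\delta(q,e)=q'$ whose classes give the ``intended'' target $([q']_1,[q']_2)$. The only way the actual target $z'$ can differ from this is through non-determinism of some $P_i(A_S)$, i.e. a branching $\delta_i([q]_i,e)=x_1$ versus $\delta_i([q]_i,e)=x_2$ with $x_1\neq x_2$; here $DC4$ supplies $\delta_i(x_1,t)!\Leftrightarrow\delta_i(x_2,t)!$ for all $t\in E_i^*$, so the branch taken by the product and the branch represented by $q'$ carry identical continuation languages. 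An induction on the length of the matched string, invoking $DC4$ at each non-deterministic step and the synchronization clause when $e\in E_1\cap E_2$, then shows $q'$ reflects every future move of $z'$, so $(z',q')$ is again a legitimate $R_2$-pair and the simulation closes.

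For the ``only if'' direction I would argue contrapositively. Suppose $DC4$ fails, so for some $i$ there are $x,x_1,x_2\in Q_i$, $e\in E_i$, $t\in E_i^*$ with $\delta_i(x,e)=x_1$, $\delta_i(x,e)=x_2$, $x_1\neq x_2$ and, say, $\delta_i(x_1,t)!$ but $\neg\,\delta_i(x_2,t)!$. I would lift this purely local witness to the product: the common source class $x$ gives a reachable composite state $z$ from which $e$ reaches two distinct composite states whose $i$-th components are $x_1$ and $x_2$. By the class-respecting constraint the unique $A_S$-successor $\delta(q,e)=q'$ can represent only one of the two branches, and feeding the string $t$ (interleaved into a suitable global string) then exhibits a composite state that is an $R_1^{-1}$-image of $q'$ yet violates the $R_2$ simulation clause; hence $R_1^{-1}\neq R_2$.

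I expect the main obstacle to be this ``only if'' lifting step: turning a single-projection violation of $DC4$ into a concrete reachable state of $P_1(A_S)||P_2(A_S)$ together with a global string realizing $t$. One must check that the offending branch is genuinely reachable in the product and that the companion projection neither re-synchronizes the two branches nor blocks $t$, which calls for careful bookkeeping of the $\sim_1$- and $\sim_2$-classes and of private versus shared events. A lesser technical point, needed to make the reduction in the first paragraph rigorous, is to confirm from the construction in Lemma~\ref{reverse similarity relation of parallel} that $R_2$ indeed respects the equivalence classes, so that $R_2\subseteq R_1^{-1}$ and only the inclusion $R_1^{-1}\subseteq R_2$ carries real content.
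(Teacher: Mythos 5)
There is a genuine gap, and it sits in the very first step of your argument. You claim that $R_2$ is contained in the class-respecting relation $\{(z,q): z=([q]_1,[q]_2)\}$, hence that $R_2\subseteq R_1^{-1}$ holds ``by construction'' and that all the content of the lemma lies in the reverse inclusion $R_1^{-1}\subseteq R_2$. This is exactly backwards. The product $P_1(A_S)||P_2(A_S)$ can have reachable states $z$ that are \emph{not} of the form $([q]_1,[q]_2)$ for any $q\in Q$; since $R_2$ must witness $P_1(A_S)||P_2(A_S)\prec A_S$, it must relate every such $z$ to \emph{some} state of $A_S$, necessarily a non-class-matching one, and these are precisely the pairs that break the symmetry. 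The paper's own Example~\ref{Undecomposable DC3-truncation} is a concrete refutation: there $DC1$--$DC3$ hold, both simulations exist, $R_1=\{(q_0,z_0),\dots,(q_4,z_4)\}$, yet $R_2$ must contain $(z_5,q_2)$ where $z_5=(x_2,y_3)$ is not $([q_2]_1,[q_2]_2)$; so $R_2\not\subseteq R_1^{-1}$ while $R_1^{-1}\subseteq R_2$ holds. The inclusion you dismiss as automatic is the one that fails when $DC4$ fails, and the inclusion you propose to prove is the trivial one. Consequently your ``if'' direction, which only verifies the simulation clause at pairs $(([q]_1,[q]_2),q)$, never touches the problematic states, and your ``only if'' direction is aimed at producing a violation of the wrong inclusion.

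For comparison, the paper avoids manipulating the (non-unique) relations directly: it first proves that for deterministic $A_1$ with $A_1\prec A_2$ and $A_2\prec A_1$, the relations can be taken mutually inverse if and only if $A_2$ is bisimilar to some \emph{deterministic} automaton (Lemma~\ref{symmetric simulations and determinism}); it then shows that $P_1(A_S)||P_2(A_S)$ admits a deterministic bisimilar automaton if and only if each $P_i(A_S)$ does (Lemma~\ref{decomposition and determinism}), the latter being exactly what $DC4$ asserts. If you want to salvage a direct argument, the work you must do is to show that under $DC4$ every reachable product state --- including those arising from non-class-matching combinations created by nondeterminism of a projection --- is bisimilar to one of the form $([q]_1,[q]_2)$, and conversely that a failure of $DC4$ yields a reachable product state that no state of $A_S$ related to it under any $R_2$ can be $R_1$-related back to; your current sketch does neither.
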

\begin{proof}
See the proof in the Appendix.
\end{proof}

Based on these lemmas, the main result on task automaton
decomposition is given as follows.
\begin{theorem}\label{Task Automaton Decomposition}
A deterministic  automaton $A_S = (Q, q_0, E=E_1\cup E_2, \delta)$
is decomposable with respect to parallel composition and natural
projections $P_i: A_S\to P_i(A_S)$, $i=1,2$, such that $A_S\cong
P_1(A_S)||P_2(A_S)$ if and only if $A_S$ satisfies the following
decomposability conditions (DC): $\forall e_1 \in E_1\backslash E_2,
e_2 \in E_2\backslash E_1, q\in Q$, $s\in E^*$,
\begin{itemize}\item $DC1$: $[\delta(q,e_1)!\wedge
\delta(q,e_2)!]\Rightarrow [\delta(q, e_1e_2)! \wedge \delta(q,
e_2e_1)!]$;
\item $DC2$: $\delta(q, e_1e_2s)!\Leftrightarrow \delta(q, e_2e_1s)!$, and \item $DC3$:
$\forall s, s^{\prime} \in E^*$, sharing the same first appearing
common event $a\in E_1 \cap E_2$, $s\neq s^{\prime}$, $q\in Q$:
$\delta(q, s)! \wedge \delta(q, s^{\prime})! \Rightarrow \delta(q,
p_1(s)|p_2(s^{\prime}))! \wedge \delta(q, p_1(s^{\prime})|p_2(s))!$;
\item $DC4$: $\forall i\in\{1, 2\}$, $x, x_1, x_2 \in Q_i$, $x_1\neq x_2$,
$e\in E_i$, $t\in E_i^*$, $\delta_i (x, e)=  x_1$, $\delta_i (x, e)=
x_2$: $\delta_i (x_1, t)! \Leftrightarrow \delta_i(x_2, t)!$.
\end{itemize}
\end{theorem}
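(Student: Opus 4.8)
The plan is to read the theorem off directly from the definition of bisimulation combined with the three lemmas just established, since each clause of the bisimulation definition has been isolated into one lemma. Recall that, by Definition \ref{simulation}, $A_S\cong P_1(A_S)||P_2(A_S)$ amounts to three requirements: that $A_S\prec P_1(A_S)||P_2(A_S)$, that $P_1(A_S)||P_2(A_S)\prec A_S$, and that the two witnessing simulation relations fit together symmetrically. Lemma \ref{similarity relation of parallel} discharges the first requirement unconditionally, with simulation relation $R_1$; Lemma \ref{reverse similarity relation of parallel} shows the second holds exactly under $DC1$--$DC3$, with simulation relation $R_2$; and Lemma \ref{symmetric} shows that, once both simulations are available, the symmetry requirement $R_1^{-1}=R_2$ is equivalent to $DC4$. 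Chaining these equivalences will give decomposability iff $DC1 \wedge DC2 \wedge DC3 \wedge DC4$.

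For the sufficiency direction I would argue constructively. Assume $DC1$--$DC4$. Lemma \ref{similarity relation of parallel} supplies $A_S\prec P_1(A_S)||P_2(A_S)$ through $R_1$, and $DC1$--$DC3$ with Lemma \ref{reverse similarity relation of parallel} supply $P_1(A_S)||P_2(A_S)\prec A_S$ through $R_2$. Both simulations now being in hand, and $A_S$ being deterministic, the hypotheses of Lemma \ref{symmetric} are met, so $DC4$ yields $R_1^{-1}=R_2$. Consequently $R_1$ is a simulation from $A_S$ whose inverse $R_2$ simulates back, i.e. $R_1$ is a symmetric bisimulation, giving $A_S\cong P_1(A_S)||P_2(A_S)$.

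For the necessity direction I would unfold $A_S\cong P_1(A_S)||P_2(A_S)$. Bisimilarity entails $P_1(A_S)||P_2(A_S)\prec A_S$, so Lemma \ref{reverse similarity relation of parallel} returns $DC1$--$DC3$ at once. It also entails $A_S\prec P_1(A_S)||P_2(A_S)$, which by Lemma \ref{similarity relation of parallel} is carried by $R_1$; together with the $R_2$ obtained from the reverse simulation, this meets the hypotheses of Lemma \ref{symmetric}, and the symmetry of the bisimulation forces $R_1^{-1}=R_2$, whence $DC4$ follows.

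I expect the delicate point to lie in the necessity of $DC4$. Bisimilarity as defined asserts only the existence of \emph{some} symmetric relation, whereas Lemma \ref{symmetric} is phrased in terms of the particular relations $R_1$ and $R_2$ constructed in the two previous lemmas; moreover the projected automata $P_i(A_S)$ may a priori be nondeterministic, so the witnessing simulation relations are not obviously unique. The crux is therefore to show that determinism of $A_S$ pins the witnessing relations down to $R_1$ and $R_2$, so that the symmetric bisimulation relation is exactly $R_1$ with inverse $R_2$; once this identification is secured, Lemma \ref{symmetric} delivers $DC4$ and the remaining bookkeeping is routine.
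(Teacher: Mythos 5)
Your proposal follows essentially the same route as the paper's proof: unfold the definition of bisimulation into its three requirements and discharge them via Lemma \ref{similarity relation of parallel} (always true), Lemma \ref{reverse similarity relation of parallel} ($DC1$--$DC3$), and Lemma \ref{symmetric} ($DC4$), chaining the equivalences in both directions. The subtlety you flag about bisimilarity only guaranteeing \emph{some} symmetric relation rather than the specific pair $(R_1, R_2)$ is real, but the paper's own proof is equally terse on this point and does not address it either.
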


\begin{proof}
According to Definition \ref{simulation}, $A_S\cong
P_1(A_S)||P_2(A_S)$ if and only if $A_S\prec P_1(A_S)||P_2(A_S)$
(that is always true due to Lemma \ref{similarity relation of
parallel}), $P_1(A_S)||P_2(A_S)\prec A_S$ (that it is true if and
only if $DC1$, $DC2$ and $DC3$ are true, according to Lemma
\ref{reverse similarity relation of parallel}) and the simulation
relations are symmetric, i.e., $R^{-1}_1 = R_2$(that for a
deterministic automaton $A_S$, when $A_S\prec P_1(A_S)||P_2(A_S)$
with simulation relation $R_1$ and $P_1(A_S)||P_2(A_S)\prec A_S$
with simulation relation $R_2$, due to Lemma \ref{symmetric},
$R^{-1}_1 = R_2$ holds true if and only if $DC4$ is satisfied).
Therefore, $A_S\cong P_1(A_S)||P_2(A_S)$ if and only if $DC1$,
$DC2$, $DC3$ and $DC4$ are satisfied.
\end{proof}

\begin{remark}\label{meaning of DC}
Intuitively, the decomposability condition $DC1$ means that for any
successive or adjacent pair of private events $(e_1, e_2)\in
\{(E_1\backslash E_2, E_2\backslash E_1)$, $(E_2\backslash E_1,
E_1\backslash E_2)\}$ (from different private event sets), both
orders $e_1e_2$ and $e_2e_1$ should be legal from the same state,
unless they are mediated by a common string.

Furthermore, $e_1e_2$ and $e_2e_1$ are not required to meet at the
same state (unlike $FD$ and $ID$ in \cite{Morin1998}); but due to
$DC2$, any string $s\in E^*$ after them should be the same, or in
other words, if $e_1$ and $e_2$ are necessary conditions for
occurrence of a string $s$, then any order of these two events would
be legal for such occurrence (see Example \ref{bisimilarityEX}).
Note that, as a special case, $s$ could be $\varepsilon$.

The condition $DC3$ means that if two strings $s$ and $s^{\prime}$
share the same first appearing common event, then any interleaving
of these two strings should be legal in $A_S$. This requirement is
due to synchronization of projections of these strings in $P_1(A_S)$
and $P_2(A_S)$.

The last condition, $DC4$, ensures the symmetry of mutual simulation
relations between $A_S$ and $P_1(A_S)||P_2(A_S)$. Given the
determinism of $A_S$, this symmetry is guaranteed when each local
task automaton bisimulates a deterministic automaton, leading to the
existence of a deterministic automaton that is bisimilar to
$P_1(A_S)||P_2(A_S)$. If the simulation relations are not symmetric,
then some of the sequences that are allowed in $A_S$ will be
disabled in $P_1(A_S)||P_2(A_S)$.

The notion of language decomposability \cite{Rudie1992} is
comparable with $DC2$ and means that any order of any successive
events in any string of the language specification should be legal,
or at least one of its projections (from the viewpoint of the
corresponding local observer) should be capable of distinguishing
this order. It also embodies a notion similar to $DC3$, stating that
the global languages specification should contain all possible
interleaving languages of all local languages. This notion, however,
is not capable of capturing the other two conditions on the decision
on the switch between adjacent transitions ($DC1$), and existence of
deterministic bisimilar automata to $P_1(A_S)$ and $P_2(A_S)$
($DC4$). The automaton decomposability conditions in this result, in
terms of bisimulation, besides checking the capability of local
plants on decision making on the orders of event ($DC2$), they
should also be capable of decision making on the switches ($DC1$),
and moreover, the synchronization of local task automata should not
lead to an illegal behavior($DC3$), and also ensures that like
$A_S$, $P_1(A_S)||P_2(A_S)$ also has a deterministic behavior
($DC4$).

The decomposability conditions can be then paraphrased as follows:
Any decision on order or switch between two transitions that cannot
be made locally (by at least one local controller) should not be
critical globally (any result of the decision should be allowed);
and interpretation of the global task by the team of local plants
should neither allow an illegal behavior (a string that is not in
global task automaton), nor disallow a legal behavior (a string that
appears in the global task automaton).
\end{remark}

The following four examples illustrate the decomposability conditions
for decomposable and undecomposable automata.

\begin{example} \label{Undecomposable DC1}
This example illustrates the concept of decision making on switching
between the events, mentioned in Remark \ref{meaning of DC}.
Furthermore, it shows an automaton that satisfies $DC2$, $DC3$ and
$DC4$, but not $DC1$, leading to undecomposability. The automaton
$A_S$: \xymatrix@R=0.1cm{
                \ar[r]&  \bullet \ar[r]^{e_1} \ar[dr]_{e_2}&\bullet  \\
             && \bullet                   }
             with local event sets $E_1=\{e_1\}$ and $E_2=\{e_2\}$,
             is not decomposable as the parallel composition of $P_1(A_S):  \xymatrix@C=0.5cm{
     \ar[r]&  \bullet \ar[r]^{e_1}&  \bullet
            }$
            and $P_2(A_S):  \xymatrix@C=0.5cm{
     \ar[r]&  \bullet \ar[r]^{e_2}&  \bullet
            }$ is
     $P_1(A_S)||P_2(A_S): \xymatrix@R=0.1cm{
                \ar[r]&\bullet \ar[r]^{e_1}\ar[dr]_{e_2}&\bullet   \ar[r]^{e_2}&\bullet .\\
             && \bullet \ar[ur]_{e_1} &&}$ which does not bisimulate
             $A_S$.
Here, $A_S$ is not decomposable with respect to parallel composition
and natural projections $P_i$, $i=1,2$, since two events $e_1\in
E_1\backslash E_2$ and $e_2\in E_2\backslash E_1$ do not respect
$DC1$, as none of the local plant takes in charge of decision making
on the switching between these two events. One can observe that, if
in this example $e_1\in E_1\backslash E_2$ and $e_2\in E_2\backslash
E_1$ were separated by a common event $a\in E_1 \cap E_2$, then
\xymatrix@R=0.1cm{
                \ar[r]&\bullet \ar[r]^{a}\ar[dr]_{e_1}&\bullet   \ar[r]^{e_2}&\bullet \\
             && \bullet  & &                }
             with local event sets $E_1=\{e_1, a\}$ and $E_2=\{e_2, a\}$, was decomposable,
             since the decision on the switch between $e_1$ and $a$ could be made in $E_1$
              and then $E_2$ could be responsible for the decision on the order of $a$ and $e_2$.
\end{example}

\begin{example}\label{Undecomposable DC2}
The automaton $A_S$ in Example \ref{bisimilarityEX} shows an
automaton that respects $DC1$, $DC3$ and $DC4$, but is
undecomposable due to violation of $DC2$. Here, $A_S$ is not
decomposable since none of the local plants take in charge of
decision making on the order of two events $e_1\in E_1\backslash
E_2$ and $e_2\in E_2\backslash E_1$. If $e_1\in E_1\backslash E_2$
and $e_2\in E_2\backslash E_1$ were separated by a common event
$a\in E_1 \cap E_2$, then the automaton \xymatrix@C=0.5cm{
     \ar[r]&  \bullet \ar[r]^{e_1} &  \bullet \ar[r]^{a} &  \bullet \ar[r]^{e_2} &  \bullet
            }
             with local event sets $E_1=\{e_1, a\}$ and $E_2=\{e_2, a\}$, was decomposable,
             since the decision on the orders of $e_1$ and $a$ and then $a$ and $e_2$ could be made in $E_1$
              and then $E_2$, subsequently.
As another example, consider an automaton $A_S$: \xymatrix@R=0.1cm{
                &    & \bullet  \ar[r]^{e_2}  & \bullet\ar[r]^{a}& \bullet  \\
\ar[r]&  \bullet \ar[ur]^{e_1} \ar[dr]_{e_2}                  \\
             &    & \bullet  \ar[r]_{e_1} &\bullet
                }  with $E_1=\{a, e_1\}$, $E_2=\{a, e_2\}$,
                leading to
$P_1(A_S)||P_2(A_S)$: \xymatrix@R=0.5cm{
                \bullet &\bullet  \ar[l]_{e_1} \ar[r]^{e_1}&\bullet &   \\
                \bullet \ar[u]^{e_2} \ar[d]_{e_2} &\grave{\bullet} \ar[l]_{e_1} \ar[r]^{e_1}\ar[u]^{e_2}  \ar[d]_{e_2} &\bullet  \ar[u]_{e_2}  \ar[d]^{e_2}& \\
                 \bullet &\bullet  \ar[l]_{e_1} \ar[r]^{e_1}&\bullet \ar[r]^{a}&\bullet
                 }. The transition $\delta_{||}(z_0, e_2e_1a)!$ in $P_1(A_S)||P_2(A_S)$,
but $\neg \delta(q_0, e_2e_1a)!$ in $A_S$. Therefore, $A_S$ is not
decomposable. If the lower branch was continued with a transition on
$a$ after $e_2e_1$, then the automaton was decomposable (See Example
\ref{bisimilarityEX}).

\end{example}
\begin{example}\label{Undecomposable DC3}
This example illustrates an automaton that satisfies $DC1$, $DC2$
and $DC4$, but it is undecomposable as it does not fulfil $DC3$,
since new strings appear in $P_1(A_S)||P_2(A_S)$ from the
interleaving of two strings in $P_1(A_S)$ and $P_2(A_S)$, but they
are not legal in $A_S$. Consider the task automaton
$A_S$:\xymatrix@R=0.1cm{
                &    &     \bullet  \ar[dr]^{e_2}    \\
\ar[r]&  \bullet \ar[ur]^{e_1} \ar[dr]_{e_2} \ar[dd]_{a}& &  \bullet  \ar[r]^{a}& \bullet  \\
              &   & \bullet \ar[ur]_{e_1}\\
              &\bullet \ar[r]_{e_2}&\bullet
                }
with $E_1=\{a, e_1\}$, $E_2=\{a, e_2\}$, leading to $P_1(A_S)\cong
\xymatrix@R=0.1cm{
                &    & \bullet  \ar[r]^{a}  & \bullet  \\
\ar[r]&  \bullet \ar[ur]^{e_1} \ar[dr]_{a} & &                 \\
             &    & \bullet &
                }$,
                $P_2(A_S)\cong \xymatrix@R=0.1cm{
              && \bullet  \ar[r]^{a}  & \bullet  \\
\ar[r]&  \bullet \ar[ur]^{e_2} \ar[dr]_{a} & &                 \\
             & & \bullet  \ar[r]_{e_2} & \bullet
                }$ and \\
 $P_1(A_S)||P_2(A_S)$: \xymatrix@C=0.5cm{
    \bullet & \ar[l]_{a} \bullet \ar[d]^{e_1} & \ar[l]_{e_2} \check{\bullet}
    \ar[r]^{a} \ar[d]^{e_1}&\bullet \ar[r]^{e_2}& \bullet \\
     \bullet & \ar[l]_{a}\bullet& \bullet  \ar[l]_{e_2} \ar[r]^{a}&\bullet\ar[r]^{e_2}& \bullet
       } that is not bisimilar to $A_S$ since two strings $e_2a$ and $e_1ae_2$ are newly
                generated, while they do not appear in $A_S$,
                although both $P_1(A_S)$ and  $P_2(A_S)$ are deterministic.

\end{example}

\begin{example}\label{Undecomposable DC3-truncation}
This example illustrates an automaton that satisfies $DC1$ and
$DC2$, and $DC3$, but is undecomposable as it does not fulfil $DC4$.
Consider the task automaton \\$A_S$: \xymatrix@R=0.1cm{
                &    & *+[o][F]{q_1}  \ar[r]^{a}  & *+[o][F]{q_2}\ar[r]^{b}& *+[o][F]{q_3}  \\
\ar[r]&  *+[o][F]{q_0} \ar[ur]^{e_1} \ar[dr]_{a}                \\
             &    &*+[o][F]{q_4}
                } \ with $E_1=\{a, b, e_1\}$, $E_2=\{a, b\}$,
               leading to\\
                $P_1(A_S)$:\xymatrix@R=0.1cm{
                &    & *+[o][F]{x_1}  \ar[r]^{a}  & *+[o][F]{x_2}\ar[r]^{b}& *+[o][F]{x_3}  \\
\ar[r]&  *+[o][F]{x_0} \ar[ur]^{e_1} \ar[dr]_{a}                \\
             &    &*+[o][F]{x_4}
                },
                 $P_2(A_S)$:\xymatrix@R=0.1cm{
                &    & *+[o][F]{y_1}  \ar[r]^{b}  &  *+[o][F]{y_2}  \\
\ar[r]&  *+[o][F]{y_0} \ar[ur]^{a} \ar[dr]_{a}                 \\
             &    & *+[o][F]{y_3}
                }, and \\
$P_1(A_S)||P_2(A_S)$: \xymatrix@R=0.1cm{
                &    & *+[o][F]{z_1}  \ar[r]^{a} \ar[dr]_{a} & *+[o][F]{z_2}\ar[r]^{b}& *+[o][F]{z_3}  \\
\ar[r]&  *+[o][F]{z_0} \ar[ur]^{e_1} \ar[dr]_{a} & &*+[o][F]{z_5}                 \\
             &    & *+[o][F]{z_4}
                }  which is not bisimilar to $A_S$. This task automaton $A_S$ satisfies $DC1$ and $DC2$ as contains no
                successive/adjacent transitions defined on different local event sets.
                But, it does not fulfil $DC4$, although any string in $T= \{
p_1(s)|p_2(s^{\prime}),\\ p_1(s^{\prime})|p_2(s)\}$ ($s$ and
$s^{\prime}$ are the top and bottom strings in $A_S$ and share the
first appearing common event $a\in E_1\cap E_2$), appears in $A_S$.
The reason is that there exists a transition on string $e_1a$ from
$z_0$ to $z_5$ that stops in $P_1(A_S)||P_2(A_S)$, whereas, although
$e_1a$ transits from $q_0$ in $A_S$, it does not stop afterwards.
This illustrate dissymmetry in simulation relations between $A_S$
and $P_1(A_S)||P_2(A_S)$. Note that $A_S \prec P_1(A_S)||P_2(A_S)$
with the simulation relation $R_1$ over all events in $E$, from all
states in $Q$ into some states in $Z$, as $R_1=\{(q_0, z_0), (q_1,
z_1),(q_2, z_2), (q_3, z_3),(q_4, z_4)\}$. Moreover, $
P_1(A_S)||P_2(A_S) \prec A_S$ with the simulation relation $R_2$
over all events in $E$, from all states in $Z$ into some states in
$Q$, as $R_2=\{(z_0, q_0), (z_1, q_1),(z_2, q_2),(z_3, q_3),(z_4,
q_4),(z_5, q_2)\}$. Therefore, although $A_S \prec
P_1(A_S)||P_2(A_S)$ and $ P_1(A_S)||P_2(A_S) \prec A_S$,
$P_1(A_S)||P_2(A_S) \ncong A_S$, since $\exists (z_5, q_2)\in R_2$,
whereas $(q_2, z_5)\notin R_1$. If for stoping of string $e_1a$ in
$P_1(A_S)||P_2(A_S)$, there was a state in $Q$ reachable from $q_0$
by $e_1a$ and stopping there, then we would have $\forall q\in Q,
z\in Z: (q,z)\in R_1\Leftrightarrow (z,q)\in R_2$ and
$P_1(A_S)||P_2(A_S) \cong A_S$.

It should be noted that the condition $DC4$ not only applies for
nondeterminism on common events, but also it requires any
nondeterminism on private event also to have a bisimilar
deterministic counterpart. For example, consider the task automaton
$A_S$: \xymatrix@R=0.1cm{ &&&\bullet\\
                \ar[r]&\bullet \ar[r]^{e_1}\ar[dr]_{e_2}&\bullet   \ar[r]_{e_2}\ar[ur]^{a}&\bullet\\
             && \bullet \ar[r]_{e_1}&\bullet} with $E_1 = \{e_1, a\}$, $E_2
             = \{e_2, a\}$. The parallel composition of $P_1(A_S)$: \xymatrix@R=0.1cm{
                \ar[r]&\bullet \ar[r]^{e_1}\ar[dr]_{e_1}&\bullet   \ar[r]^{a}&\bullet \\
             && \bullet } (with nondeterministci transition on private event $e_1$)
             and $P_2(A_S)$: \xymatrix@R=0.1cm{
                \ar[r]&\bullet \ar[r]^{a}\ar[dr]_{e_2}&\bullet   \\
             && \bullet } is $P_1(A_S)||P_2(A_S)$: \xymatrix@C=0.5cm{
             &&\bullet \ar[r]^{e_2}&\bullet\\
     \ar[r]&  \bullet \ar[r]^{e_2}\ar[ur]_{e_1}\ar[dr]^{e_1} &  \bullet
    \ar[ur]_{e_1}\ar[dr]^{e_1}\\
    &\bullet &  \bullet \ar[r]^{e_2} \ar[l]_{a}&  \bullet
       } which is not bisimilar to $A_S$.

 The automaton $A_S$: \xymatrix@R=0.1cm{
                &    & \bullet  \ar[r]^{a}  & \bullet  \\
\ar[r]&  \bullet \ar[ur]^{e_1} \ar[dr]_{a} & &                 \\
             &    & \bullet  \ar[r]_{e_2} & \bullet
                } with $E=E_1\cup E_2$, $E_1=\{a, e_1\}$, $E_2=\{a, e_2\}$,
                 $P_1(A_S)$:\xymatrix@R=0.1cm{
                &    & \bullet  \ar[r]^{a}  & \bullet  \\
\ar[r]&  \bullet \ar[ur]^{e_1} \ar[dr]_{a} & &                 \\
             &    & \bullet &
                } and $P_2(A_S)$: \xymatrix@R=0.1cm{
                &    & \bullet  \ar[r]^{e_2}  & \bullet  \\
\ar[r]&  \bullet \ar[ur]^{a} \ar[dr]_{a} & &                 \\
             &    & \bullet &
                } is an example of an undecomposable automaton that violates both
                $DC3$ and $DC4$. It violates $DC3$ since $\delta_{||}(z_0, e_1ae_2)!$ in
                $P_1(A_S)||P_2(A_S)$, but $\neg\delta(q_0,
                e_1ae_2)!$ in $A_S$, and it does not satisfy $DC4$ since $P_2(A_S)$
                is nondeterministic and is not bisimilar to a
                deterministic automaton, leading to a string in
                $P_1(A_S)||P_2(A_S)$ that $e_2$ is disallowed after
                $a$ while there in no such restriction in $A_S$.
If $A_S$ was $A_S$: \xymatrix@R=0.1cm{
                &    & \bullet  \ar[r]^{a}  & \bullet \ar[r]^{e_2}  & \bullet \\
\ar[r]&  \bullet \ar[ur]^{e_1} \ar[dr]_{a} & &                 \\
             &    & \bullet  \ar[r]_{e_2} & \bullet
                }, then $P_2(A_S)\cong \xymatrix@R=0.1cm{\ar[r]&  \bullet \ar[r]_{a} & \bullet \ar[r]_{e_2} &
                \bullet}$, and $A_S$ was decomposable.

\end{example}

\begin{remark}\label{language equivalent but not bisimilar}
Example \ref{Undecomposable DC3-truncation} also shows that the
determinism of $A_S$ does not reduce the bisimulation synthesis
problem to language equivalence synthesis problem. Note that here,
$A_S$ and $P_1(A_S)||P_2(A_S)$ are language equivalent, but not
bisimilar. The reason is that although $A_S$ is deterministic, and
$A_S \prec P_1(A_S)||P_2(A_S)$, $ P_1(A_S)||P_2(A_S) \prec A_S$, the
simulation relations are not symmetric due to existence of
nondeterministic strings in $P_1(A_S)||P_2(A_S)$ that can not be
replaced by a deterministic one. The nondeterminism in
$P_1(A_S)||P_2(A_S)$ is inherited from a nondeterminism in
$P_2(A_S)$. If $A_S$ was in the form of $A_S$: \xymatrix@R=0.1cm{
                &    & \bullet  \ar[r]^{a}  & \bullet\ar[r]^{b}& \bullet  \\
\ar[r]&  \bullet \ar[ur]^{e_1} \ar[dr]_{a}                \\
             &    &\bullet \ar[r]^{b}&\bullet
                }, then $P_1(A_S)\cong A_S$, $P_2(A_S)\cong \xymatrix@R=0.1cm{
    \ar[r]& \bullet  \ar[r]^{a}  & \bullet\ar[r]^{b}& \bullet}$ and
$P_1(A_S)||P_2(A_S)\cong A_S$.
\end{remark}

\section{Hierarchical decomposition}\label{HIERARCHICAL DECOMPOSITION}
The previous section showed the decomposition of an automaton with
respect to the parallel composition and two local event sets.
However, in practice, multi-agent systems are typically comprised of
many individual agents that work as a team. The proposed procedure
of decomposition can be generalized for more than two agents.
However, the problem becomes rapidly complex as the number of agents
increases. 
It is then advantageous to have a hierarchical decomposition method
to have only two individual event sets at a time for decomposition.
Consider a task automaton $A_S$ to be decomposed with respect to
parallel composition and individual event sets $E_i$, $i=1,2,...,n$,
so that $E=\overset{n}{\underset{i=1}{\cup}}E_i$. We propose the
following algorithm as a sufficient condition for hierarchical
decomposition of the given task automaton.

\begin{algorithm}\label{Hierarchical Decomposition Algorithm 1}(Hierarchical Decomposition
Algorithm)
\begin{enumerate}
\item $E=\overset{n}{\underset{i=1}{\cup}}E_i $, $\Sigma = \{E_1,...,
E_n\}$, $K =\{1,...,n\}$.
\item $i=1$, find $k\in K$ such that
$\Sigma _i = E_k\in \Sigma$, $\bar{\Sigma} _i =
\overset{}{\underset{j\in K\backslash k}{\cup}}E_j$, so that $A_S$
satisfies decomposability conditions $DC1$-$DC4$ in Theorem
\ref{Task Automaton Decomposition}, i.e., $A_S \cong P_{\Sigma
_i}(A_S)||P_{\bar{\Sigma} _i}(A_S)$.
\item $K = K\backslash k$, $\Sigma = \{E_j\}_{j\in K}$, $A_S = P_{\bar{\Sigma}_i}(A_S)$, $i = i+1$, go to Step $2$.
\item Continue until $i = n-1$ or no more decomposition is possible in
$i = m-1$, $m\leq n$. Then $\Sigma _m = \bar{\Sigma}_{m-1}$, and
hence, $A_S$ is decomposable with respect to parallel composition
and natural projections into $\{\Sigma_1, \cdots,
\Sigma_m\}\subseteq \Sigma$, if the algorithm proceeds up to $i =
m-1$.
\end{enumerate}
\end{algorithm}
\begin{remark}
The algorithm will terminate due to finite number of states and
local event sets. If the algorithm successfully proceeds to step
$n-1$, the automaton $A_S$ is decomposable and we obtain a complete
decomposition of the global specifications into subtasks for each
individual agent. However, it is unclear whether the algorithm can
 successfully terminate for any decomposable task automaton (necessity).
 The computational complexity of the algorithm in the worst case is of order
 $O(n^2(|E|^2\times |Q|\times \kappa+\overset{}{\underset{a\in E_1\cap E_2}{\Sigma}}|p_a(L(A_S))|^2))$, where
$\kappa = \overset{}{\underset{t\in L(A_S)}{max}}|t|$, assuming the
number of appearing events as the length of loops.
 In practice, during the iterations, $|E|$ is replaced by
 $\overset{|K|}{\underset{j=1}{\cup}}E_j$ which is decreasing with respect to iteration. Moreover, the second term in
 the complexity expression shows that the less number of common events
 and the less appearance of common events in $A_S$, the less
 complexity.
\end{remark}

Once the subtasks are obtained, the next step is the design of
controllers for each agent to achieve these subtasks respectively.
The following result shows that the fulfillment of the decomposed
subtasks will imply the global specifications for the multi-agent
systems. Before stating the theorem, following two lemmas are
presented to be used for the proof.
\begin{lemma}\label{associativity-parallel}
\begin{eqnarray} \label{associativity-hierarchicality}
P_{\Sigma_ 1}(A_S)\parallel P_{\Sigma_ 2}(A_S) \parallel \cdots
\parallel
P_{\Sigma_{m-1}}(A_S)\parallel P_{\Sigma_m}(A_S)\cong \nonumber \\
P_{\Sigma_1}(A_S)\parallel \left(P_{\Sigma_2}(A_S)
\parallel \left( \cdots \parallel \left( P_{\Sigma_{m-1}}(A_S)\parallel
P_{\Sigma_m}(A_S)\right)\right)\right) \nonumber
\end{eqnarray}
\end{lemma}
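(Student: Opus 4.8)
The plan is to recognize this lemma as the associativity of parallel composition applied to the particular automata $P_{\Sigma_i}(A_S)$, and to prove it by reducing to the three-factor case and then inducting on $m$. Since the specific structure of the factors as projections plays no role here — associativity of $\parallel$ holds for arbitrary automata — I would prove the statement for general automata $A_1,\dots,A_m$ and specialize. The essential content is therefore the binary identity $(A_1\parallel A_2)\parallel A_3 \cong A_1\parallel(A_2\parallel A_3)$, which is precisely the associativity property already invoked in Definition \ref{parallel composition} and attributed to \cite{Cassandras2008}.

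For the three-factor case I would exhibit an explicit \emph{isomorphism} rather than merely a bisimulation, since that is both cleaner and strictly stronger. Define $\theta$ on states by $\theta\bigl(((q_1,q_2),q_3)\bigr) = (q_1,(q_2,q_3))$. Both sides share the global event set $(E_1\cup E_2)\cup E_3 = E_1\cup(E_2\cup E_3)$, and $\theta$ sends the initial state $((q_1^0,q_2^0),q_3^0)$ to $(q_1^0,(q_2^0,q_3^0))$, so by Definition \ref{Isomorphism} it remains only to check that $\theta$ commutes with the two transition relations, i.e. $\theta(\delta_L(\cdot,e)) = \delta_R(\theta(\cdot),e)$ for every $e$, where $\delta_L$ and $\delta_R$ denote the transition relations of the left- and right-nested products. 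This is verified by a case analysis on the membership of $e$ in the private, pairwise-shared, and fully-shared parts of $E_1,E_2,E_3$ dictated by Definition \ref{parallel composition}: in every case a component automaton moves exactly when $e$ lies in its event set, and the synchronization guards on both sides reduce to the same conjunction of conditions $\delta_i(q_i,e)!$ over precisely those $i$ with $e\in E_i$ (so that the ``undefined'' cases also coincide). Hence $\theta$ is an isomorphism, and by Definition \ref{Isomorphism} the two automata are bisimilar.

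Finally, I would lift this three-factor identity to arbitrary $m$ by induction, using that $\cong$ is an equivalence relation and is a congruence for $\parallel$ — that is, $B\cong B'$ implies $A\parallel B \cong A\parallel B'$, which follows by lifting a bisimulation $R$ between $B$ and $B'$ to the relation $\{((a,b),(a,b')) : (b,b')\in R\}$ between $A\parallel B$ and $A\parallel B'$ and checking the clauses of Definition \ref{simulation}. Given binary associativity, transitivity of $\cong$, and this congruence property, a routine induction on $m$ shows that every parenthesization of the ordered list $P_{\Sigma_1}(A_S),\dots,P_{\Sigma_m}(A_S)$ is bisimilar to the fully right-nested one; in particular the (implicitly grouped) left-hand side and the explicitly right-nested right-hand side agree up to $\cong$, which is exactly the claim and is consistent with the notation already fixed in Definition \ref{parallel composition}.

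The main obstacle, such as it is, is bookkeeping rather than conceptual: one must ensure the case analysis in the binary step exhausts every way $e$ can be distributed among $E_1,E_2,E_3$ and that the congruence of $\cong$ under $\parallel$ is established before running the induction. Since associativity of parallel composition is a cited fact, this lemma is essentially a restatement, and the proof carries no genuine difficulty beyond the routine verification described above.
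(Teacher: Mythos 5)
Your proposal is correct and follows essentially the same route as the paper: the paper's entire proof is a one-line appeal to the associativity of parallel composition cited from the textbook reference, together with the convention (already built into Definition \ref{parallel composition}) that the $m$-fold product denotes the right-nested one. You simply supply the details the paper delegates to the citation --- the explicit isomorphism $\theta\bigl(((q_1,q_2),q_3)\bigr)=(q_1,(q_2,q_3))$ for the binary step, the congruence of $\cong$ under $\parallel$ (which the paper separately records as Lemma \ref{ParallelSimulation}), and the induction on $m$ --- so there is no gap and no substantive difference in approach.
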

\begin{proof}
See the proof in the Appendix.
\end{proof}
\begin{lemma}\label{ParallelSimulation}
If two automata $A_2$ and $A_4$ (bi)simulate, respectively, $A_1$
and $A_3$, then $A_2\parallel \  A_4$ (bi)simulates $A_1\parallel \
A_3$, i.e.,
\begin{enumerate}
\item $\left(A_1\prec A_2 \right)\wedge \left(A_3\prec
A_4\right)\Rightarrow \left(A_1\parallel \ A_3\prec A_2\parallel \
A_4\right)$;
\item $\left(A_1\cong A_2 \right)\wedge \left(A_3\cong
A_4\right)\Rightarrow \left(A_1\parallel \ A_3\cong A_2\parallel \
A_4\right)$;
\end{enumerate}
 \end{lemma}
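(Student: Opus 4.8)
The plan is to prove the simulation statement (1) by exhibiting an explicit product simulation relation, and then to upgrade it to the bisimulation statement (2) by observing that the product of two symmetric relations is again symmetric. Let $A_1 \prec A_2$ be witnessed by a simulation relation $R_{12} \subseteq Q_1 \times Q_2$ and $A_3 \prec A_4$ by $R_{34} \subseteq Q_3 \times Q_4$ (Definition \ref{simulation}). Writing the common event set of the first pair as $E_A$ (shared by $A_1, A_2$) and that of the second pair as $E_B$ (shared by $A_3, A_4$), both composed automata $A_1 \parallel A_3$ and $A_2 \parallel A_4$ live over $E_A \cup E_B$. I would then define
$$R = \{\, ((q_1, q_3), (q_2, q_4)) \mid (q_1, q_2) \in R_{12}, \ (q_3, q_4) \in R_{34} \,\}$$
and claim that $R$ is a simulation from $A_1 \parallel A_3$ to $A_2 \parallel A_4$.

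First I would check the initial condition: since $(q_1^0, q_2^0) \in R_{12}$ and $(q_3^0, q_4^0) \in R_{34}$, the initial pair $((q_1^0, q_3^0), (q_2^0, q_4^0))$ lies in $R$. The core of the argument is the transition-matching clause. Take any $((q_1, q_3), (q_2, q_4)) \in R$ and any transition $\delta((q_1, q_3), e) = (q_1', q_3')$ in $A_1 \parallel A_3$, and split into the three cases dictated by Definition \ref{parallel composition}. If $e \in E_A \cap E_B$, both components step, so $\delta_1(q_1, e) = q_1'$ and $\delta_3(q_3, e) = q_3'$; applying the two simulations yields states $q_2', q_4'$ with $(q_1', q_2') \in R_{12}$ and $(q_3', q_4') \in R_{34}$, and since $e$ is still shared in the composition, $\delta((q_2, q_4), e) = (q_2', q_4')$ with $((q_1', q_3'), (q_2', q_4')) \in R$. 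If $e \in E_A \setminus E_B$, only the first component moves ($q_3' = q_3$); the relation $R_{12}$ supplies a matching $q_2'$, the second pair $(q_3, q_4) \in R_{34}$ is untouched, and $\delta((q_2, q_4), e) = (q_2', q_4)$ matches. The case $e \in E_B \setminus E_A$ is symmetric. This establishes statement (1), and totality of $R$ over the reachable part of $A_1 \parallel A_3$ follows from totality of $R_{12}$ and $R_{34}$.

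For the bisimulation statement (2), I would start from symmetric relations $R_{12}$ and $R_{34}$ witnessing $A_1 \cong A_2$ and $A_3 \cong A_4$ (Definition \ref{simulation}), and form the same product $R$. Because each factor is symmetric, $R$ is symmetric, and the case analysis above — which only used that $R_{12}$ and $R_{34}$ are simulations in the relevant direction — shows that $R$ is a simulation from $A_1 \parallel A_3$ to $A_2 \parallel A_4$ and, read backwards, a simulation in the reverse direction. Hence $R$ is a symmetric bisimulation and $A_1 \parallel A_3 \cong A_2 \parallel A_4$.

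I expect the only delicate point to be the bookkeeping in the transition-matching step: one must verify in each of the three event cases that the component left fixed indeed stays paired under the corresponding factor relation, and that the classification of $e$ as shared or private is the same in the component automata and in the composition (which holds precisely because $A_1, A_2$ share the event set $E_A$ and $A_3, A_4$ share $E_B$). Everything else is a routine unwinding of Definitions \ref{simulation} and \ref{parallel composition}.
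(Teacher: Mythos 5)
Your proof is correct and follows essentially the same route as the paper: the product relation $R = \{((q_1,q_3),(q_2,q_4)) \mid (q_1,q_2)\in R_{12},\ (q_3,q_4)\in R_{34}\}$, verified to be a simulation by the three-way case split on whether $e$ is shared by both components or private to one of them. The only divergence is in part (2), where the paper builds the reverse product relation $\bar{R}$ from the reverse simulations and establishes $R^{-1}=\bar{R}$ via a transition-level contradiction argument, whereas you read the same fact off directly from the symmetry of the factor relations --- a harmless (and cleaner) shortcut.
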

\begin{proof}
See the proof in the Appendix.
\end{proof}
\begin{theorem}\label{Top-down}
Consider a plant, represented by a deterministic parallel
distributed system $A_{\Delta}=\overset{n}{\underset{i=1}{\parallel}
}A_{P_i}$, with given local event sets $E_i$, $i=1,...,n$, and given
specification represented  by a deterministic decomposable automaton
$A_S \cong \overset{n}{\underset{i=1}{\parallel} }P_i(A_S)$, with
$E=\overset{n}{\underset{i=1}{\cup} }E_i$. If the Algorithm
\ref{Hierarchical Decomposition Algorithm 1} continues up to
$i=n-1$, then designing local controllers $A_{C_i}$, so that
$A_{C_i}\parallel A_{P_i}\cong P_i(A_S)$, $i=1,2,\cdots,n$, derives
the global closed loop system to satisfy the global specification
$A_S$, in the sense of bisimilarity, i.e.,
$\overset{n}{\underset{i=1}{\parallel} }(A_{C_i}\parallel
A_{P_i})\cong A_S$.
\end{theorem}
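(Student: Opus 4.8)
The plan is to reduce the claim to the decomposability hypothesis $A_S \cong \overset{n}{\underset{i=1}{\parallel}} P_i(A_S)$ by replacing, factor by factor, each local specification $P_i(A_S)$ with the bisimilar closed-loop subsystem $A_{C_i}\parallel A_{P_i}$, exploiting that bisimilarity is a congruence for parallel composition. I would first record the two facts that drive everything: the hypothesis that $A_S$ is decomposable, i.e. $A_S \cong \overset{n}{\underset{i=1}{\parallel}} P_i(A_S)$ (which is precisely what a successful run of Algorithm \ref{Hierarchical Decomposition Algorithm 1} up to step $i=n-1$ certifies), and the controller design requirement $A_{C_i}\parallel A_{P_i} \cong P_i(A_S)$ for every $i$. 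The target equivalence then follows once I show $\overset{n}{\underset{i=1}{\parallel}}(A_{C_i}\parallel A_{P_i}) \cong \overset{n}{\underset{i=1}{\parallel}} P_i(A_S)$ and chain this with the decomposability hypothesis using transitivity of $\cong$ (bisimilarity being an equivalence relation).

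The core step is an induction on $n$ establishing $\overset{n}{\underset{i=1}{\parallel}}(A_{C_i}\parallel A_{P_i}) \cong \overset{n}{\underset{i=1}{\parallel}} P_i(A_S)$. For $n=2$ this is a direct application of Lemma \ref{ParallelSimulation}(2) to the pair of bisimilarities $A_{C_1}\parallel A_{P_1}\cong P_1(A_S)$ and $A_{C_2}\parallel A_{P_2}\cong P_2(A_S)$. For the inductive step I would use the associativity Lemma \ref{associativity-parallel} to split off the first factor, writing $\overset{n}{\underset{i=1}{\parallel}}(A_{C_i}\parallel A_{P_i}) \cong (A_{C_1}\parallel A_{P_1}) \parallel \left(\overset{n}{\underset{i=2}{\parallel}}(A_{C_i}\parallel A_{P_i})\right)$, invoke the induction hypothesis on the $(n-1)$-fold tail, apply Lemma \ref{ParallelSimulation}(2) once more to combine the head bisimilarity with the tail bisimilarity, and finally reassemble the right-hand side into $\overset{n}{\underset{i=1}{\parallel}} P_i(A_S)$ by a second use of associativity. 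A concluding appeal to transitivity with the decomposability hypothesis yields $\overset{n}{\underset{i=1}{\parallel}}(A_{C_i}\parallel A_{P_i}) \cong A_S$, as required.

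The main obstacle is exactly this lifting of the binary congruence Lemma \ref{ParallelSimulation} to the $n$-ary product: because parallel composition is associative only up to bisimilarity (Lemma \ref{associativity-parallel}), the inductive step cannot treat $\parallel$ as a strictly associative operation on automata, so one must carefully interleave the associativity rewrites with the congruence applications and hold the whole chain together with the symmetry and transitivity of $\cong$. A secondary point to check is that every factor lives over the correct alphabet: since both $A_{C_i}\parallel A_{P_i}$ and $P_i(A_S)$ are defined over $E_i$, their $n$-fold parallel compositions are both defined over $E=\overset{n}{\underset{i=1}{\cup}}E_i$, matching the alphabet of $A_S$, which legitimizes each application of Lemma \ref{ParallelSimulation}. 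I note that determinism of $A_S$ and of the plant does not enter the argument directly here; it is already subsumed into the given bisimilarities, so the proof is essentially an algebraic manipulation within the congruence and associativity structure of parallel composition.
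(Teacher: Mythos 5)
Your proposal is correct and follows essentially the same route as the paper: it takes the decomposability $A_S \cong \overset{n}{\underset{i=1}{\parallel}} P_i(A_S)$ certified by the algorithm, substitutes each $P_i(A_S)$ by the bisimilar closed loop $A_{C_i}\parallel A_{P_i}$ via Lemma \ref{ParallelSimulation}.2 together with the associativity Lemma \ref{associativity-parallel}, and concludes by transitivity of $\cong$. The only difference is that you spell out the induction lifting the binary congruence to the $n$-ary product, which the paper leaves implicit; this is a welcome clarification rather than a different argument.
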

\begin{proof}
Algorithm \ref{Hierarchical Decomposition Algorithm 1} is a direct
extension of Theorem \ref{Task Automaton Decomposition} combined
with Lemma \ref{associativity-parallel}. Then, choosing local
controllers $A_{C_i}$, so that $A_{C_i}\parallel A_{P_i}\cong
P_i(A_S)$, $i=1,2,\cdots,n$, due to Lemma
$\ref{ParallelSimulation}.2$ leads to
$\overset{n}{\underset{i=1}{\parallel} }(A_{C_i}\parallel
A_{P_i})\cong \overset{n}{\underset{i=1}{\parallel} }P_i(A_S)\cong
A_S$.
\end{proof}

Now, if $DC1$-$DC4$ is reduced to $DC1$-$DC3$ (Theorem \ref{Task
Automaton Decomposition} is reduced into Lemma \ref{reverse
similarity relation of parallel}), then
$\overset{n}{\underset{i=1}{\parallel} }P_i(A_S)\cong A_S$ is
reduced into $\overset{n}{\underset{i=1}{\parallel} }P_i(A_S)\prec
A_S$, and hence, choosing local controllers $A_{C_i}$, so that
$A_{C_i}\parallel A_{P_i}\prec P_i(A_S)$, $i=1,2,\cdots,n$, due to
Lemma $\ref{ParallelSimulation}.1$ leads to
$\overset{n}{\underset{i=1}{\parallel} }(A_{C_i}\parallel
A_{P_i})\prec \overset{n}{\underset{i=1}{\parallel} }P_i(A_S)\prec
A_S$. Therefore,
\begin{corollary}
Considering the plant and global task as stated in Theorem
\ref{Top-down}, if $DC1$-$DC4$ is reduced to $DC1$-$DC3$ in
Algorithm \ref{Hierarchical Decomposition Algorithm 1} and it
continues up to $i=n-1$, then designing local controllers $A_{C_i}$,
so that $A_{C_i}\parallel A_{P_i}\prec P_i(A_S)$, $i=1,2,\cdots,n$,
derives the global closed loop system to satisfy the global
specification $A_S$, in the sense of similarity, i.e.,
$\overset{n}{\underset{i=1}{\parallel} }(A_{C_i}\parallel
A_{P_i})\prec A_S$.
\end{corollary}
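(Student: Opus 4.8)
The plan is to run the proof of Theorem~\ref{Top-down} essentially verbatim, but with every bisimulation $\cong$ weakened to a one-sided simulation $\prec$, and with part~1 of Lemma~\ref{ParallelSimulation} used wherever that proof invoked part~2.

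First I would record the reduced global relation $\overset{n}{\underset{i=1}{\parallel}}P_i(A_S)\prec A_S$. Dropping $DC4$ collapses Theorem~\ref{Task Automaton Decomposition} to Lemma~\ref{reverse similarity relation of parallel}, whose conclusion for two event sets is precisely the one-sided statement $P_1(A_S)\parallel P_2(A_S)\prec A_S$ rather than bisimilarity. Feeding this weaker per-stage guarantee into Algorithm~\ref{Hierarchical Decomposition Algorithm 1} run up to $i=n-1$, and telescoping the successive two-way splits via the associativity of parallel composition (Lemma~\ref{associativity-parallel}, which holds up to bisimilarity and hence up to simulation in either direction), produces $\overset{n}{\underset{i=1}{\parallel}}P_i(A_S)\prec A_S$, the one-sided analogue of the $\cong$ obtained inside Theorem~\ref{Top-down}.

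Next, from the design hypothesis $A_{C_i}\parallel A_{P_i}\prec P_i(A_S)$, $i=1,\dots,n$, I would apply Lemma~$\ref{ParallelSimulation}.1$ inductively, pairing each closed-loop subsystem $A_{C_i}\parallel A_{P_i}$ (in the role of $A_1$) with its local specification $P_i(A_S)$ (in the role of $A_2$), to obtain $\overset{n}{\underset{i=1}{\parallel}}(A_{C_i}\parallel A_{P_i})\prec\overset{n}{\underset{i=1}{\parallel}}P_i(A_S)$; here associativity (Lemma~\ref{associativity-parallel}) makes the $n$-fold compositions unambiguous and reflexivity of $\prec$ lets already-fixed factors ride along unchanged. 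Chaining this with the first step by transitivity of $\prec$ then yields $\overset{n}{\underset{i=1}{\parallel}}(A_{C_i}\parallel A_{P_i})\prec\overset{n}{\underset{i=1}{\parallel}}P_i(A_S)\prec A_S$, which is the claim.

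The inductive use of Lemma~$\ref{ParallelSimulation}.1$ and the transitivity of $\prec$ are routine; the step needing the most care is the first. I must confirm that deleting $DC4$ downgrades the two-agent result to \emph{exactly} one-sided simulation, so that Lemma~\ref{reverse similarity relation of parallel} applies unchanged, and that this simulation is reproduced consistently at each iteration of the hierarchical algorithm --- in particular that the intermediate projections $P_{\bar{\Sigma}_i}(A_S)$ carry the relation in the same direction, so that the $n-1$ stages telescope into a single $\prec$ pointing toward $A_S$. Because simulation is directional, the only real danger is a stage producing the reverse inclusion; but Lemma~\ref{reverse similarity relation of parallel} always orients the relation as $P_{\Sigma_i}(A_S)\parallel P_{\bar{\Sigma}_i}(A_S)\prec A_S$, so every stage contributes in the same direction and the chain closes.
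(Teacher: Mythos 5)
Your proposal is correct and follows essentially the same route as the paper: the paper likewise observes that dropping $DC4$ reduces Theorem \ref{Task Automaton Decomposition} to Lemma \ref{reverse similarity relation of parallel}, so the algorithm yields only $\overset{n}{\underset{i=1}{\parallel}}P_i(A_S)\prec A_S$, and then invokes Lemma \ref{ParallelSimulation}.1 on the hypothesis $A_{C_i}\parallel A_{P_i}\prec P_i(A_S)$ to conclude $\overset{n}{\underset{i=1}{\parallel}}(A_{C_i}\parallel A_{P_i})\prec\overset{n}{\underset{i=1}{\parallel}}P_i(A_S)\prec A_S$. Your additional care about the orientation of the simulation at each hierarchical stage is a sound elaboration of the same argument.
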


\begin{remark}
It should be noted that in this approach, the parallel composition
requires a fixed communication pattern among the ``local'' automata
to synchronize on their common events. This framework is therefore
suitable for static distributed systems. For moving agent systems,
the agents are required to provide large enough communication range
to ensure that the connectivity is preserved during the movement, to
ensure the correct synchronization on the common events.
\end{remark}
\section{Example}\label{EXAMPLE}Consider a cooperative
multi-robot system (MRS) configured in Figure \ref{MRS-Cooperation}.
The  MRS consists of three robots $R_1$, $R_2$ and $R_3$. All robots
have the same communication and positioning capabilities.
Furthermore, the robot $R_2$ has the rescue and fire-fighting
capabilities, while $R_1$ and $R_3$ are normal robots with the
pushing capability. Initially, all of them are positioned in Room
$1$. Rooms $2$ and $3$ are accessible from Room $1$ by one-way door
$D_2$ and two-way doors $D_1$ and $D_3$, as shown in Figure
\ref{MRS-Cooperation}. All doors are equipped with spring to be
closed automatically, when there is no force to keep them open.

\begin{figure}[ihtp]
      \begin{center}
     \includegraphics[width=0.45\textwidth]{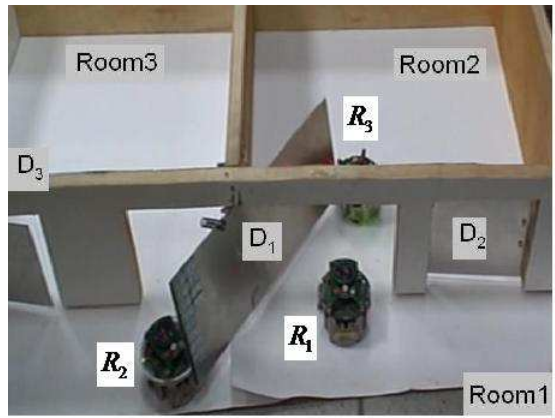}
        \caption{The environment of MRS coordination example.}
         \label{MRS-Cooperation}
        \end{center}
\end{figure}

Assume that Room $2$ requests help for fire extinguishing. After the
help announcement, the Robot $R_2$ is required to go to Room $2$,
urgently from $D_2$ and accomplish its task there and come back
immediately to Room $1$. However, $D_2$ is a one-way door, and,
$D_1$ is a heavy door and needs cooperation of two robots $R_1$ and
$R_3$ to be opened. To save time, as soon as the robots hear the
help request from Room $2$, $R_2$ and $R_3$ go to Rooms $2$ and $3$,
from $D_2$ and $D_3$, respectively, and then $R_1$ and $R_3$
position on $D_1$, synchronously open $D_1$ and wait for
accomplishment of task of $R_2$ in Room $2$ and returning to Room
$1$ ($R_2$ is fast enough). Afterwards, $R_1$ and $R_3$ move
backward to close $D_1$ and then $R_3$ returns back to Room $1$ from
$D_3$. All robots then stay at Room $1$ for the next
 task.

These requirements can be translated into a task automaton for the
robot team as it is illustrated in Figure $\ref{global task
automaton}$, defined over local event sets $E_1 = \{h_1, R_1toD_1,
R_1onD_1, FWD, D_1opened, R_2in1,\\ BWD, D_1closed, r\}$, $E_2=
\{h_2$, $R_2to2, R_2in2, D_1opened, R_2to1, R_2in1, r\}$, and $E_3 =
\{h_3, R_3to3,\\ R_3in3, R_3toD_1$, $R_3onD_1, FWD, D_1opened,
R_2in1, BWD, D_1closed, R_3to1, R_3in1, r\}$, with $h_i$:= $R_i$
received help request, $i= 1, 2, 3$; $R_jtoD_1$:=command for $R_j$
to position on $D_1$, $j= 1, 3$; $R_jonD_1$:= $R_j$ has positioned
on $D_1$, $j= 1,3$; $FWD$:= command for moving forward (to open
$D_1$); $BWD$:= command for moving backward (to close $D_1$) ;
$D_1opened$:= $D_1$ has been opened;  $D_1closed$:= $D_1$ has been
closed; $r$:= command to go to initial state for the next
implementation; $R_itok$:= command for $R_i$ to go to Room $k$, and
$R_iink$:= $R_i$ has gone to Room $k$, $i = 1, 2, 3$, $k = 1, 2, 3$.
\begin{figure}[ihtp]
      \begin{center}
     $A_S$:\\ \xymatrix@R=0.5cm{
 \bullet  \ar[r]^{r}  & \check{\bullet}\ar[r]^{h_1}\ar[d]_{h_2}\ar[dr]
 &\bullet \ar[r]^{R_1toD_1}\ar[dr] &\bullet \ar[r]^{R_1onD_1}\ar[dr] &\bullet \ar[dr]^{h_3}\\
\bullet \ar[u]^{R_3in1}& \bullet \ar[d]_{R_2to2}\ar[dr]&\bullet
\ar[d]\ar[dr]\ar[r]&\bullet \ar[r]\ar[dr] &\bullet
\ar[r]\ar[dr] &\bullet \ar[dr]^{R_3to3}\\
& \bullet \ar[d]_{R_2in2}\ar[dr] &\bullet \ar[d]\ar[dr]& \bullet
\ar[r]\ar[d]\ar[dr]
 &\bullet \ar[r]\ar[dr]
  &\bullet \ar[r]\ar[dr]
  &\bullet \ar[dr]^{R_3in3}\\
 \bullet
\ar[uu]^{R_3to1}&\bullet\ar[dr]&\bullet\ar[d]\ar[dr]
&\bullet\ar[d]\ar[dr] &\bullet\ar[d]\ar[dr]\ar[r]
&\bullet\ar[r]\ar[dr] &\bullet\ar[r]\ar[dr]
&\bullet\ar[dr]^{R_3toD_1}\\
&&\bullet\ar[dr]&\bullet\ar[d]\ar[dr] &\bullet\ar[d]\ar[dr]
&\bullet\ar[d]\ar[r]\ar[dr] &\bullet\ar[r]\ar[dr]
&\bullet\ar[r]\ar[dr]
&\bullet\ar[dr]^{R_3onD_1}\\
&&\bullet\ar[uull]^{D_1Closed}&\bullet\ar[dr] &\bullet\ar[d]\ar[dr]
&\bullet\ar[d]\ar[dr] &\bullet\ar[d]\ar[r] &\bullet\ar[d]\ar[r]
&\bullet\ar[d]\ar[r]
&\bullet\ar[d]\ar[dr]^{FWD}\\
&&&&\bullet\ar[dr] &\bullet\ar[d]\ar[dr] &\bullet\ar[d]\ar[r]
&\bullet\ar[d]\ar[r] &\bullet\ar[d]\ar[r]
&\bullet\ar[d]\ar[dr] &\bullet\ar[d]^{h_2}\\
&&&&\bullet\ar[uull]^{BWD} &\bullet\ar[dr] &\bullet\ar[d]\ar[r]
&\bullet\ar[d]\ar[r] &\bullet\ar[d]\ar[r]
&\bullet\ar[d]\ar[dr] &\bullet\ar[d]^{R_2to2}\\
 &&&&&&\bullet\ar[r]
&\bullet\ar[r]&\bullet\ar[r]&
\bullet\ar[dr]& \bullet\ar[d]^{R_2in2}\\
&&&&&&\bullet\ar[uull]^{R_2in1}
&&\bullet\ar[ll]^{R_2to1}&&\bullet\ar[ll]^{D_1opened}
                }
        \caption{Task automaton $A_S$ for robot team.}
\label{global task automaton}
        \end{center}
      \end{figure}
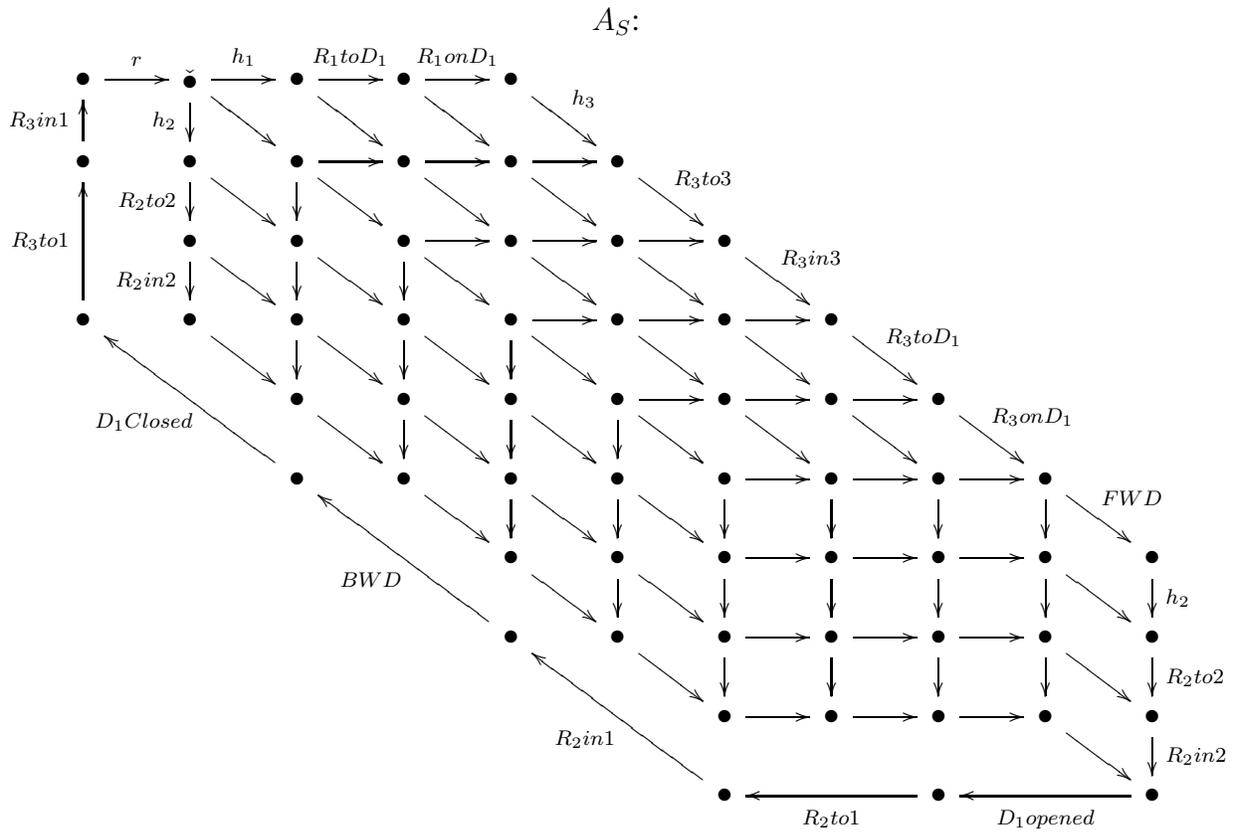

The states show the labels of transitions after  occurrence, and
hence are not labeled, here. Meaning that $\delta(x, e) = y$ is
interpreted as follows: Once $e$ occurs in $x$, it transits into $y$
and this state $y$ is labeled by $e (occurred)$.  For example, when
``help request'' event happens, the state of each robot transits
from the initial state to the state ``the robot received help
request''.

 We check decomposability condition for this global task
automaton with respect to $\Sigma_1 = E_2$ and $\bar{\Sigma}_1 =
E_1\cup E_3$ for the first stage in Algorithm $1$. Firstly, $\{h_2$,
$R_2to2, R_2in2\} = \Sigma_1\backslash \bar{\Sigma}_1$ can occur in
any order with respect to  $\{h_1, R_1toD_1, R_1onD_1, h_3, R_3to3,
R_3in3, R_3toD_1, R_3onD_1, FWD\}\\ = \bar{\Sigma}_1\backslash
\Sigma_1$, as it is shown in the global automaton in Figure
$\ref{global task automaton}$, satisfying $DC1$ and $DC2$. Moreover,
$D_1opened$, $R_2in1$ and $r$ are common events, provided by $R_1$,
$R_2$ and $R_3$, respectively, and informed to the other two robots
upon occurrence.

 Since
$\{D_1opened, FWD\}\subseteq \bar{\Sigma}_1$, $\{R_2in2,
D_1opened\}\subseteq \Sigma_1$, $\{D_1opened$, $R_2to1\}\subseteq
\Sigma_1$, $\{R_2to1, R_2in1\}\subseteq \Sigma_1$, $\{R_2in1,
BWD\}\subseteq \bar{\Sigma}_1$, $\{BWD, D_1closed\}\subseteq
\bar{\Sigma}_1$, $\{D_1closed,
 R_3to1\}\subseteq \bar{\Sigma}_1$, $\{R_3to1,
 R_3in1\}\subseteq \bar{\Sigma}_1$, $\{R_3in1,
 r\}\subseteq \bar{\Sigma}_1$, $\{r,
 h_1\}\subseteq \bar{\Sigma}_1$, $\{r,
 h_2\}\subseteq \Sigma_1$, $\{r,
 h_3\}\subseteq \bar{\Sigma}_1$, and hence all these successive transitions
 satisfy $DC1$ and $DC2$. Furthermore, all common events
 $\{D_1opened, R_2in1, r\} \subseteq \Sigma_1 \cap \bar{\Sigma}_1$ appear in
 only one branch, and hence, $DC3$ is satisfied. Finally,
 $P_{\Sigma_1}(A_S)$ and $P_{\bar{\Sigma}_1}(A_S)$ are both
 deterministic, and hence, $DC4$ is satisfied.

Therefore, due to Theorem \ref{Task Automaton Decomposition}, $A_S$
can be decomposed into $P_2(A_S) = P_{\Sigma_1}(A_S)=A_2$ and
$P_{1,3}=P_{\bar{\Sigma}_1}(A_S)$. The second stage of hierarchical
decomposition, decomposes $P_{1,3}(A_S)$ into $A_1=P_1(A_S)$ and
$A_3=P_3(A_S)$. The private transitions defined over $E_1\backslash
E_3=\{h_1, R_1toD_1, R_1onD_1\}$ can occur in any order with respect
to the transitions defined over the private local event set
$E_3\backslash E_1=\{h_3, R_3to3, R_3in3, R_3toD_1$, $R_3onD_1\}$.
Since $\{R_3onD_1, FWD\}\subseteq E_3$, $\{R_1onD_1, FWD\}\\
\subseteq E_1$, $\{FWD, D_1opened\}\subseteq E_1\cap E_3$,
$\{D_1opened, R_2in1\}\subseteq E_1 \cap E_3$, $\{R_2in1,
BWD\}\subseteq E_1 \cap E_3$, $\{BWD$, $D_1closed\}\subseteq E_1
\cap E_3$, $\{D_1closed, R_3to1 \}\subseteq E_3$, $\{R_3to1, R_3in1
\}\subseteq E_3$, $\{ R_3in1, r \}\subseteq E_3$, $\{r, h_3
\}\subseteq E_3$ and $\{r, h_1 \}\subseteq E_1$, then $DC1$ and
$DC2$ are satisfied. Furthermore, since all common events $\{FWD,
D_1opened, R_2in1, BWD$, $D_1closed, r\}\subseteq E_1 \cap E_3$
appear in only one branch in $P_{1,3}(A_S)$, therefore, there are no
pairs of strings violating $DC3$, and hence, $DC3$ is also
satisfied.  Moreover, $P_1(A_S)$ and $P_2(A_S)$ are both
deterministic, and consequently, $P_{1,3}(A_S)$ satisfies $DC4$. The
results of two decomposition stages are shown in Figures \ref{The
First Stage of Decomposition} and \ref{The Second Stage of
Decomposition}, such that $P_1(A_S)\parallel P_2(A_S)\parallel
P_3(A_S)\cong A_S$.

It can be seen that the design of supervisor to satisfy these
individual task automata is easier than the design of a global
supervisor to satisfy the global specification. Furthermore, since
the specification is determined for each agent, the global task can
be achieved in a decentralized fashion.

\begin{figure}[ihtp]
      $P_{1,3}(A_S)$: \begin{center}
      \xymatrix@R=0.5cm{
 \ar[r] & \bullet\ar[r]^{h_1}\ar[dr]
 &\bullet \ar[r]^{R_1toD_1}\ar[dr] &\bullet \ar[r]^{R_1onD_1}\ar[dr] &\bullet \ar[dr]^{h_3}\\
&\bullet \ar[u]^{r}& \bullet \ar[dr]\ar[r]&\bullet \ar[r]\ar[dr]
&\bullet
\ar[r]\ar[dr] &\bullet \ar[dr]^{R_3to3}\\
& & \bullet \ar[ul]^{R_3in1}
 &\bullet \ar[r]\ar[dr]
  &\bullet \ar[r]\ar[dr]
   &\bullet \ar[r]\ar[dr]
  &\bullet \ar[dr]^{R_3in3}\\
 &&&\bullet
\ar[ul]^{R_3to1}&\bullet\ar[dr]\ar[r] &\bullet\ar[r]\ar[dr]
&\bullet\ar[r]\ar[dr]
&\bullet\ar[dr]^{R_3toD_1}\\
&&&&\bullet\ar[ul]^{D_1closed} &\bullet\ar[r]\ar[dr]
&\bullet\ar[r]\ar[dr]&\bullet\ar[r]\ar[dr]
&\bullet\ar[dr]^{R_3onD_1}\\
&&&&&\bullet\ar[ul]^{BWD}& \bullet\ar[r] &\bullet\ar[r]
&\bullet\ar[r]
&\bullet\ar[dr]^{FWD}\\
&&&&&&\bullet\ar[ul]^{R_2in1} &&&&\bullet\ar[llll]^{D_1opened}
                }
 $P_2(A_S)$:  \xymatrix@C=0.5cm{
     \ar[r]&  \bullet \ar[r]^{h_2} &  \bullet  \ar[r]_{R_2to2}& \bullet \ar[r]^{R_2in2}& \bullet\ar[r]_{D_1opened}&
     \bullet\ar[r]^{R_2to1}& \bullet\ar[r]_{R_2in1}& \bullet
\ar`dr_l[llllll]`_u[llllll]_{r}[llllll]
   }
        \caption{$P_2(A_S)$ for $R_2$ and  $P_{\{E_1 \cup E_3\}}(A_S)$ for the team $\{R_1,R_3\}$.}
\label{The First Stage of Decomposition}
        \end{center}
      \end{figure}
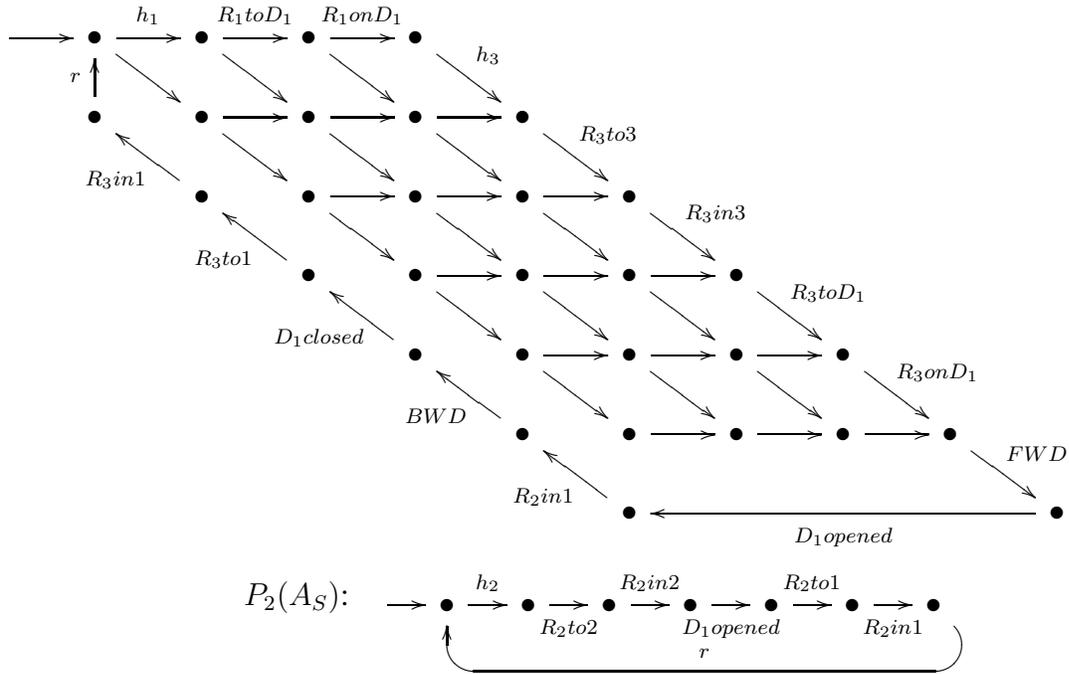

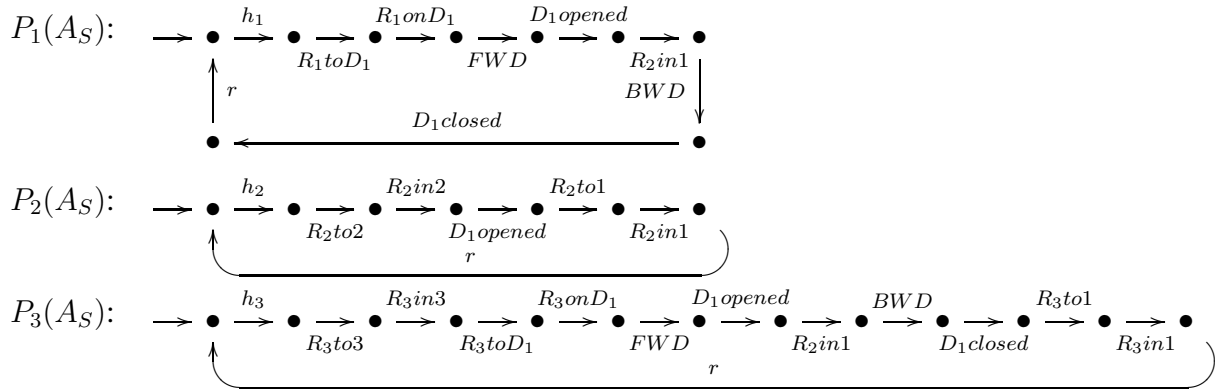
\begin{figure}[ihtp]
     $P_1(A_S)$:
      \xymatrix@C=0.5cm{
     \ar[r]&  \bullet \ar[r]^{h_1}&  \bullet\ar[r]_{R_1toD_1}& \bullet\ar[r]^{R_1onD_1}&
     \bullet\ar[r]_{FWD}& \bullet\ar[r]^{D_1opened}& \bullet\ar[r]_{R_2in1}&
     \bullet\ar[d]_{BWD}\\
     &\bullet \ar[u]_{r}&&&&&&\bullet\ar[llllll]_{D_1closed}
   }\\
$P_2(A_S)$: \xymatrix@C=0.5cm{
     \ar[r]&  \bullet \ar[r]^{h_2} &  \bullet  \ar[r]_{R_2to2}& \bullet \ar[r]^{R_2in2}& \bullet\ar[r]_{D_1opened}&
     \bullet\ar[r]^{R_2to1}& \bullet\ar[r]_{R_2in1}& \bullet
\ar`dr_l[llllll]`_u[llllll]_{r}[llllll]
   }\\
       $P_3(A_S)$:  \xymatrix@C=0.5cm{
     \ar[r]&  \bullet \ar[r]^{h_3}& \bullet \ar[r]_{R_3to3}& \bullet \ar[r]^{R_3in3}&
     \bullet\ar[r]_{R_3toD_1}& \bullet\ar[r]^{R_3onD_1}&
     \bullet\ar[r]_{FWD}& \bullet\ar[r]^{D_1opened}& \bullet\ar[r]_{R_2in1}& \bullet\ar[r]^{BWD}&
     \bullet\ar[r]_{D_1closed}&
     \bullet\ar[r]^{R_3to1}&
     \bullet\ar[r]_{R_3in1}&\bullet
\ar`dr_l[llllllllllll]`_u[llllllllllll]_{r}[llllllllllll]
   }
        \caption{$P_1(A_S)$ for $R_1$; $P_2(A_S)$ for $R_2$ and  $P_3(A_S)$ for $R_3$.}
 \label{The Second Stage of Decomposition}

      \end{figure}



Discussions on the design of supervisory control (for local local
plants and local task automata) and also refining of the low level
continuous controllers can be found in \cite{Cassandras2008},
\cite{Kumar1999}, \cite{Tabuada2003}, \cite{Belta2005} and
\cite{Belta2007}.

This scenario has been successfully implemented on a team of three
ground robots, shown in Figure \ref{MRS-Cooperation}.

\section{Conclusion}\label{CONCLUSION}

The paper proposed a formal method for automaton decomposition that
facilitates the top-down distributed cooperative control of
multi-agent systems. Given a set of agents whose logical behaviors
can be modeled as a parallel distributed system, and a global task
automaton, the paper has the following contributions: firstly, we
have provided necessary and sufficient conditions for
decomposability of an automaton with respect to parallel composition
and natural projections into two local event sets; secondly, the
approach has been extended into a sufficient condition for an
arbitrary finite number of agents, using a hierarchical algorithm,
and finally, we have shown that if a global task automaton is
decomposed into the local tasks for the individual agents, designing
the local supervisors for each agent, satisfying the local tasks,
guarantees that the closed loop system of the team of agents
satisfies the global specification.

The implementation of this approach is decentralized, in the sense
that there is no central unit to coordinate the agents; however,
they need communication to synchronize on common events. This
approach differs from the classical decentralized supervisory
control that refers to the configuration of a monolithic plant,
controlled by several supervisors that are distributed into
different nodes \cite{Cassandras2008}. The proposed approach is more
suitable for those applications with distributed configurations for
both plant and supervisor, such as multi-robot coordination systems
and sensor/actuator networks. In addition, due to associativity
property of parallel composition, the proposed approach can be
modular such that when a new task automaton is introduced to the
systems, one can decompose the new global task automaton, and then
compose the new local task automata with the corresponding old local
task automata.



\section{Appendix}

\subsection{Proof for Lemma \ref{similarity relation of parallel}}

We prove $A_S \prec P_1(A_S)||P_2(A_S))$ by showing that $R =
\{(q,z)\in Q\times Z | \exists s\in E^*, \delta(q_0, s) = q,
z=([q]_1, [q]_2) \}$ is a simulation relation, defined on all events
in $E$ and all reachable states in $A_S$. Consider $A_S = (Q, q_0,
E= E_1\cup E_2, \delta)$, $P_i(A_S) = (Q_i, q_i^0, E_i, \delta_i)$,
$i=1,2$, $P_1(A_S)||P_2(A_S) = (Z, z_0, E, \delta_{||})$. Then,
$\forall q, q^{\prime}\in Q, e\in E, \delta (q, e)= q^{\prime}$,
according to definition of natural projection (Definition
\ref{Natural Projection on Automaton}) $[q^{\prime}]_i = \left\{
  \begin{array}{ll}
\delta_i([q]_i, e) & \hbox{if $e\in E_i$;} \\
 {[q]}_i  & \hbox{if $e\notin E_i$}
  \end{array}
\right.$, $i = 1,2$, and due to definition of parallel composition
(Definition \ref{parallel composition}) $\delta_{||} (([q]_1, [q]_2)
, e) = ([q^{\prime}]_1, [q^{\prime}]_2) =
    \left\{
\begin{array}{ll}
    \left(\delta_1([q]_1, e), \delta_2([q]_2, e)\right), & \hbox{if $e\in E_1 \cap
    E_2$};\\
    \left(\delta_1([q]_1, e), [q]_2\right), & \hbox{if $e\in E_1 \backslash E_2$;} \\
    \left([q]_1, \delta_2([q]_2, e)\right), & \hbox{if $e\in E_2 \backslash E_1$.}
\end{array}\right.$

This is true for any $q\in Q$, particularly for $q_0$.  This
reasoning can be repeated for any reachable state in $Q$. Therefore,
starting from $q_0$ and taking $\left(q_0, Z_0= ([q_0]_1,
[q_0]_2)\right)\in R$, from the above construction, it follows that
for any reachable state in $Q$ ($\exists s\in E^*, \delta(q_0,
s)=q$) and $q^{\prime}\in Q$, $e\in E$, $\delta(q, e)= q^{\prime}$,
there exists $z = ([q]_1, [q]_2)$, $z^{\prime} = ([q^{\prime}]_1,
[q^{\prime}]_2)$ such that $\delta(z, e) = z^{\prime}$, and we can
take $(q, z)\in R$ and $(q^{\prime}, z^{\prime})\in R$. Therefore,
$R = \{(q,z)\in Q\times Z | \exists s\in E^*, \delta(q_0, s)!,
z=([q]_1, [q]_2) \}$ is a simulation relation, defined over all
$e\in E$, and all reachable states in $A_S$, and hence, $A_S \prec
P_1(A_S)||P_2(A_S)$.
\subsection{Proof for Lemma \ref{reverse similarity relation of parallel}}
We use two following lemmas during the proof.
\begin{lemma}\label{Gamma1}
Consider a deterministic automaton $A_S=(Q, q_0, E = E_1\cup E_2,
\delta)$. Then $DC1 \wedge DC2 \Rightarrow [\forall s\in E^*$,
$\delta(q_0, s)!\Rightarrow \delta(q_0, p_1(s)|p_2(s))!$ in $A_S$.
\end{lemma}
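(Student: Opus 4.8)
The plan is to recognize the interleaving language $p_1(s)|p_2(s)$ as a commutation (Mazurkiewicz trace) class of $s$ and then to transport executability across that class one adjacent swap at a time. First I would record the structural description of the interleaving: by Definition \ref{Interleaving}, $p_1(s)|p_2(s) = L\{PA(q_0,p_1(s))\,||\,PA(q_0',p_2(s))\}$, and since both path automata are linear, a string $w$ lies in this language exactly when $p_1(w)=p_1(s)$ and $p_2(w)=p_2(s)$. Thus, using the convention $\delta(q,L)!\Leftrightarrow \forall\omega\in L:\delta(q,\omega)!$, the goal $\delta(q_0,p_1(s)|p_2(s))!$ unfolds into the statement that \emph{every} $w$ with the same $E_1$- and $E_2$-projections as $s$ is executable from $q_0$. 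Writing $s=w_0a_1w_1\cdots a_kw_k$ with $a_j\in E_1\cap E_2$ and each $w_j\in (E\backslash(E_1\cap E_2))^*$, every such $w$ has the form $u_0a_1u_1\cdots a_ku_k$, where $u_j$ merely reshuffles the private events of $w_j$ while preserving the relative order of the $E_1\backslash E_2$ events and that of the $E_2\backslash E_1$ events; the common events act as fixed synchronization barriers that no private event may cross.

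The crux is a single one-step transposition claim: if $\delta(q_0,u\alpha\beta v)!$ with $\alpha,\beta$ drawn from different private sets (one in $E_1\backslash E_2$, the other in $E_2\backslash E_1$), then $\delta(q_0,u\beta\alpha v)!$. I would set $q=\delta(q_0,u)$, which is well defined since every prefix of an executable string is executable, so that $\delta(q,\alpha\beta v)!$. Then $DC2$, being a biconditional valid for an arbitrary tail, yields $\delta(q,\beta\alpha v)!$ no matter which of $\alpha,\beta$ is the $E_1$- and which the $E_2$-private event; prepending $u$ gives $\delta(q_0,u\beta\alpha v)!$. Here $DC1$ underwrites the legality of the bare adjacent swap whenever both events are simultaneously enabled, while $DC2$ is what actually carries the common tail $v$ through the swap.

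I would then invoke the standard trace-theoretic fact that, under the independence relation pairing one $E_1\backslash E_2$ event with one $E_2\backslash E_1$ event, any $w$ with $p_1(w)=p_1(s)$ and $p_2(w)=p_2(s)$ is obtainable from $s$ by a finite sequence of adjacent independent transpositions, none of which displaces a common event. Inducting on the length of this sequence, with the one-step claim as the inductive step and $\delta(q_0,s)!$ as the base, keeps every intermediate word executable and delivers $\delta(q_0,w)!$. Since $w$ was an arbitrary element of $p_1(s)|p_2(s)$, this gives $\delta(q_0,p_1(s)|p_2(s))!$, as required.

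I expect the main obstacle to be making the trace-theoretic reduction airtight rather than the algebra itself: precisely, proving (i) that the interleaving language coincides with the commutation class determined by the projections $p_1$ and $p_2$, and (ii) that this class is generated by adjacent swaps of private cross-class events that never move a common event, so that $DC2$ is always applicable exactly at the swap site. The inductive bookkeeping—ensuring each intermediate word stays executable, hence each invocation of $DC2$ sits at a genuinely reachable state—is routine once (i) and (ii) are secured, and the determinism of $A_S$ guarantees that each $q=\delta(q_0,u)$ is unique, so the swaps are unambiguous.
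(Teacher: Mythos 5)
Your proof is correct, but it takes a genuinely different route from the paper's. The paper proves Lemma \ref{Gamma1} by decomposing $s$ into alternating private blocks $\omega^k$ and common blocks $\gamma^k$, explicitly tracking the product states $(x_{k+i},y_{k+j})$ reached along every interleaving, and assembling an explicit simulation relation $R=\cup_k R(\omega^k)\cup R(\gamma^k)$ from the path automaton $PA(z_0,p_1(s)|p_2(s))$ into $A_S$; it invokes $DC1$ and $DC2$ jointly to get $\delta(q,\alpha_i^k\beta_j^k)!\wedge\delta(q,\beta_j^k\alpha_i^k)!$ at each such state. You instead characterize $p_1(s)|p_2(s)$ as the Mazurkiewicz commutation class of $s$ under the independence relation pairing $E_1\backslash E_2$ with $E_2\backslash E_1$, and transport executability across the class one adjacent swap at a time, each swap justified by $DC2$ applied at the (uniquely determined, reachable) state $q=\delta(q_0,u)$ with the common tail carried along. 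This is cleaner and more modular than the paper's indexed bookkeeping, and it exposes something the paper obscures: only $DC2$ is actually used (the branching disjunct that $DC1$ supplies never arises when interleaving a single executable string with itself), which is harmless since an unused hypothesis does not weaken the implication. The costs of your route are the two facts you yourself flag: (i) the identification of the interleaving language with the set of words having the same $E_1$- and $E_2$-projections as $s$ (strictly, with its prefix closure, which is absorbed because executability is prefix-closed), and (ii) the projection lemma asserting that this class is generated by adjacent swaps of cross-private events that never move a common event. Both are standard and provable by a direct induction for the two-clique covering $\{E_1,E_2\}$ of the dependence relation, but a complete write-up would need to include or cite that argument, whereas the paper's construction, though heavier, is self-contained.
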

This lemma means that for any transition defined on a string in
$A_S$, all path automata defined on the interleaving of $p_1(s)$ and
$p_2(s)$ in $P_1(A_S)||P_2(A_S)$ are simulated by $A_S$, provided
$DC1$ and $DC2$.

\begin{proof}
Consider a deterministic automaton $A_S=(Q, q_0, E = E_1\cup E_2,
\delta)$, a string $s\in E^*$, $\delta(q_0, s)=q$, and its
projections $p_1(s)$, $p_2(s)$ with $\delta_1(x_0, p_1(s))$ = $x$,
$\delta_2(y_0, p_2(s))= y$ and $(x,y) \in \delta_{||}((x_0, y_0),
p_1(s)|p_2(s))$, in $P_1(A_S)$, $P_2(A_S)$ and $P_1(A_S)||P_2(A_S)$,
respectively. Any string $s$ can be written as $s=\omega^1
\gamma^1...\ \omega^K \gamma^K$, with $\omega^k\in
[E\backslash(E_1\cap E_2)]^*$, $\gamma^k \in (E_1\cap E_2)^*$,
$p_1(\omega^k)=\alpha^k = \alpha^k_0...\alpha^k_{m_k}$, $\alpha^k_0
= \varepsilon$, $p_2(\omega^k)=\beta^k = \beta^k_0...\beta^k_{n_k}$,
$\beta^k_0 = \varepsilon$, $p_1(\gamma^k) = p_2(\gamma^k)=\gamma^k =
\gamma^k_0...\gamma^k_{r_k}$, $\gamma^k_0 = \varepsilon$. The case
$m_k=0$, $n_k=0$, $r_k=0$, $K=0$, results in
$p_1(\omega^k)=\varepsilon$, $p_2(\omega^k)=\varepsilon$, $\gamma
^k=\varepsilon$ and $s=\varepsilon$. Based on this setting and
definition of parallel composition, for $k=0,...,K$, $i=0,...,m_k$,
$j=0,...,n_k$ and $r=0,...,r_k$, the interleaving $p_1(s)|p_2(s)$ is
evolved in $P_1(A_S)||P_2(A_S)$ as follows: $\forall
(x_{k+i},y_{k+j}) \in Q_1\times Q_2$:
$\delta_{||}((x_{k+i},y_{k+j}), \alpha_i^k)$=$ (\delta_1(x_{k+i},
\alpha_i^k), y_{k+j})$, $\delta_{||}((\delta_1(x_{k+i}, \alpha_i^k),
y_{k+j}), \beta_j^k)$=$ ((\delta_1(x_{k+i}, \alpha_i^k),
\delta_2(y_{k+j}, \beta_j^k))$, $\delta_{||}((x_{k+i}, y_{k+j}),
\beta_j^k)$=$ (x_{k+i}, \delta_2(y_{k+j}, \beta_j^k))$,
$\delta_{||}((x_{k+i}, \delta_2(y_{k+j}, \beta_j^k)), \alpha_i^k)
$=$(\delta_1(x_{k+i}, \alpha_i^k), \delta_2(y_{k+j}, \beta_j^k))$,
$\delta_{||}((x_{k+m_k+r}, y_{k+n_k+r}), \gamma_r^k))=\\
(\delta_1(x_{k+m_k+r}, \gamma_r^k), \delta_2(y_{k+n_k+r},
\gamma_r^k))$ with $\delta_1(x_{k+i}, \alpha_i^k) =
\left\{\begin{array}{ll}
x_{k+i} & \mbox{if } \alpha^k_i=\varepsilon \\
x_{k+i+1} & \mbox{if } \alpha^k_i\neq\varepsilon
\end{array}\right.$, $\delta_2(y_{k+j}, \beta_j^k) =
\left\{\begin{array}{ll}
y_{k+j} & \mbox{if } \beta^k_j=\varepsilon \\
y_{k+j+1} & \mbox{if } \beta^k_j\neq\varepsilon
\end{array}\right.$ and $(\delta_1(x_{k+m_k+r}, \gamma_r^k), \delta_2(y_{k+n_k+r},
\gamma_r^k))=\\ \left\{\begin{array}{ll}
(x_{k+m_k+r}, y_{k+n_k+r}) & \mbox{if} \gamma^k_r=\varepsilon \\
(x_{k+m_k+r+1}, y_{k+n_k+r+1})  & \mbox{if
}\gamma^k_r\neq\varepsilon
\end{array}\right.$.

Moreover, $DC1$ and $DC2$ collectively imply that  $\forall e_1 \in
E_1\backslash E_2$, $e_2 \in E_2\backslash E_1$, $q\in Q$,
$[\delta(q,e_1)!\wedge \delta(q,e_2)!]\vee \delta(q, e_1e_2)!\vee
\delta(q, e_2e_1)!\Rightarrow \delta(q, e_1e_2)! \wedge \delta(q,
e_2e_1)!$ which particularly means that $\forall k\in\{0,...K\}$,
$\forall \alpha^k_i, \beta^k_j, q_{i,j}\in Q$, $i = 0,...,m_k$,
$j=0,...,n_k$, $r=0,...,r_k$: $\delta(q_{k+i,k+j},
\alpha^k_i\beta^k_j)! \wedge \delta(q_{k+i,k+j},
\beta^k_j\alpha^k_i)!$ with a simulation relation $R(\omega^k)=\{
((x_{k+i},y_{k+j}),q_{k+i,k+j})$,
 $(\delta_1(x_{k+i}, \alpha_i^k), y_{k+j})$, $\delta(q_{k+i,k+j}, \alpha_i^k))$,
$((x_{k+i},\delta_2(y_{k+j}, \beta_j^k)), \delta(q_{k+i,k+j},
\beta_j^k))$, $((\delta_1(x_{k+i}, \alpha_i^k),
\delta_2(y_{k+j},\beta_j^k))$, $\delta(q_{k+i,k+j},\\
\alpha_i^k\beta_j^k))$, $((\delta_1(x_{k+i}, \alpha_i^k),
\delta_2(y_{k+j}, \beta_j^k)), \delta(q_{k+i,k+j},
\beta_j^k\alpha_i^k))\}$
 from transitions defined on $p_1(\omega^k)|p_2(\omega^k)$ into $A_S$. For the transitions on
the common events, the evolutions are $\delta(q_{k+m_k+r, k+n_k+r},
\gamma_r^k)!$ in $A_S$ for $r = 0,..., r_k$, leading to simulation
relation $R(\gamma^k)= \{((x_{k+m_k+r}, y_{k+n_k+r})$,$ q_{k+m_k+r,
k+n_k+r})$, $((\delta_1(x_{k+m_k+r}, \gamma_r^k)$,$
\delta_2(y_{k+n_k+r}, \gamma_r^k))$, $\delta(q_{k+m_k+r, k+n_k+r}$,
$\gamma_r^k))\}$ from transitions on $p_1(\gamma^k)|p_2(\gamma^k)$
into $A_S$. Therefore, for $i=0,...,m_k$, $j=0,...,n_k$, $r =
0,...,r_k$, $R= \overset{K}{\underset{k=0}{\cup\ }} R(\omega^k)\cup
R(\gamma^k)$ defines a simulation relation from $PA(z_0,
p_1(s)|p_2(s))$ (in $P_1(A_S)||P_2(A_S)$) into $A_S$.
\end{proof}
\begin{lemma}\label{Gamma2}
If $DC1$ and $DC2$ hold true, then $\forall s, s^{\prime} \in E^*$,
$\delta(q_0, s)!$, $\delta(q_0, s^{\prime})!$, $s\neq s^{\prime}$,
$p_{E_1\cap E_2}(s)$, $p_{E_1\cap E_2}(s^{\prime})$ do not start
with the same $a\in E_1\cap E_2$, then $\delta(q_0,
p_1(s)|p_2(s^{\prime}))! \wedge \delta(q_0,
p_1(s^{\prime})|p_2(s))!$ in $A_S$.
\end{lemma}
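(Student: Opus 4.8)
The plan is to reduce the claim to Lemma~\ref{Gamma1} (the ``diagonal'' case $s=s'$) by exploiting the hypothesis that $p_{E_1\cap E_2}(s)$ and $p_{E_1\cap E_2}(s')$ do not begin with the same common event. First I would analyze the shape of $p_1(s)|p_2(s')$. Let $v$ be the prefix of $s$ up to (but not including) its first common event $a\in E_1\cap E_2$, and let $v'$ be the corresponding prefix of $s'$ with first common event $b$ (if $s$, resp. $s'$, has no common event, take $v=s$, resp. $v'=s'$). Then $v,v'\in\left((E_1\backslash E_2)\cup(E_2\backslash E_1)\right)^*$ and $p_1(s)=p_1(v)\,a\cdots$, $p_2(s')=p_2(v')\,b\cdots$. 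In the path-automaton composition defining the interleaving, every common event must synchronize on both chains; since the first common event offered by $p_1(s)$ is $a$ while that offered by $p_2(s')$ is $b\neq a$ (or one side offers none), no common event can ever fire and the composition stalls after emitting only the private prefixes. Hence $p_1(s)|p_2(s')$ equals the free shuffle of $x:=p_1(v)\in(E_1\backslash E_2)^*$ and $y:=p_2(v')\in(E_2\backslash E_1)^*$, whose alphabets are disjoint. This reduction is routine from the definitions, but it is exactly where the distinct-first-common-event hypothesis enters.

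Next I would settle the legality of $x$ and $y$ separately. Since $\delta(q_0,s)!$ we have $\delta(q_0,v)!$, so Lemma~\ref{Gamma1} gives $\delta(q_0,p_1(v)|p_2(v))!$; as this interleaving language is prefix closed it contains $x=p_1(v)$, whence $\delta(q_0,x)!$. Symmetrically, $\delta(q_0,v')!$ yields $\delta(q_0,y)!$ (as $p_2(v')$ is a prefix of $p_2(v')p_1(v')$, a string in $p_1(v')|p_2(v')$).

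The main obstacle is the combinatorial step: from $\delta(q_0,x)!$ and $\delta(q_0,y)!$, with $x$ over $E_1\backslash E_2$ and $y$ over $E_2\backslash E_1$, deduce $\delta(q_0,xy)!$. I would prove the stronger commutation fact that for all $q\in Q$, all $\xi\in(E_1\backslash E_2)^*$ and $\eta\in(E_2\backslash E_1)^*$, $\delta(q,\xi)!\wedge\delta(q,\eta)!\Rightarrow\delta(q,\xi\eta)!\wedge\delta(q,\eta\xi)!$, by induction on $|\xi|+|\eta|$. The base cases are trivial. For the step I would first establish, by an inner induction on $|\eta|$, that a single private event $e\in E_1\backslash E_2$ commutes past the whole of $\eta$: $DC1$ transposes the two leading events ($\delta(q,e)!\wedge\delta(q,\eta_1)!\Rightarrow\delta(q,e\eta_1)!\wedge\delta(q,\eta_1 e)!$), the inner hypothesis applied at $\delta(q,\eta_1)$ handles $\eta''$, and $DC2$ ($\delta(q,e_1e_2t)!\Leftrightarrow\delta(q,e_2e_1t)!$) restores the suffix after the transposition; this gives $\delta(q,e\eta)!$ and $\delta(q,\eta e)!$. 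The outer step then peels off $\xi_1$, uses this sub-fact to obtain $\delta(\delta(q,\xi_1),\eta)!$, and applies the outer hypothesis at $\delta(q,\xi_1)$. Because $DC1$ and $DC2$ are quantified over all states, the induction is legitimately carried out from the intermediate states reached along the way. This nested induction is the real content of the lemma; the rest is bookkeeping.

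Finally, with $W:=xy$ satisfying $\delta(q_0,W)!$, $p_1(W)=x$, $p_2(W)=y$, and $W$ containing no common event, Lemma~\ref{Gamma1} applied to $W$ gives $\delta(q_0,p_1(W)|p_2(W))!=\delta(q_0,x|y)!=\delta(q_0,p_1(s)|p_2(s'))!$, which is the first conjunct. The second conjunct $\delta(q_0,p_1(s')|p_2(s))!$ follows by interchanging the roles of $s$ and $s'$ (equivalently of the indices $1$ and $2$) throughout the argument.
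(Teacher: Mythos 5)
Your nested induction establishing $\delta(q,\xi)!\wedge\delta(q,\eta)!\Rightarrow\delta(q,\xi\eta)!\wedge\delta(q,\eta\xi)!$ for $\xi\in(E_1\backslash E_2)^*$, $\eta\in(E_2\backslash E_1)^*$ from $DC1$ and $DC2$ is sound and is in fact more careful than the paper's treatment of the corresponding step (the paper's ``Case 1'', which it dispatches by saying ``similar to the first part of Lemma \ref{Gamma1}''). The problem is the reduction that precedes it: your claim that, because the first common events of $p_{E_1\cap E_2}(s)$ and $p_{E_1\cap E_2}(s')$ differ, ``no common event can ever fire and the composition stalls after emitting only the private prefixes'' is false, so you end up proving legality only for a proper sub-language of $p_1(s)|p_2(s')$. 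Two things go wrong. First, the path automata $PA(\cdot,p_1(s))$ and $PA(\cdot,p_2(s'))$ synchronize only on events that occur in \emph{both strings}, not on all of $E_1\cap E_2$; a common event $a$ that appears in $s$ but not in $s'$ is unshared between the two path automata and fires freely. Concretely, take $E_1=\{e_1,a\}$, $E_2=\{e_2,a\}$, $s=e_1a$, $s'=e_2$: the hypothesis of the lemma holds ($p_{E_1\cap E_2}(s')=\varepsilon$), the two path-automaton alphabets are disjoint, and $p_1(s)|p_2(s')$ is the full shuffle of $e_1a$ with $e_2$, containing $e_2e_1a$ and $e_1ae_2$ --- strings your reduction discards, yet whose runnability in $A_S$ is exactly what the lemma asserts (and what is needed in the sufficiency proof of Lemma \ref{reverse similarity relation of parallel}). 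Second, even when an event is genuinely shared, the composition need not deadlock: with $p_1(s)=e_1a$ and $p_2(s')=e_2ba$ (first common events $a\neq b$, but $b$ absent from $p_1(s)$), the second chain consumes $e_2b$ unsynchronized and then both chains offer $a$, so the interleaving contains $e_1e_2ba$.

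The paper avoids this by decomposing $s=\omega_1a\omega_2$ and $s'=\omega_1'$ or $s'=\omega_1'b\omega_2'$ and explicitly tracking what happens \emph{after} the first common events, invoking Lemma \ref{Gamma1} to handle the tails $p_1(\omega_2)|p_2(\omega_2)$ and $p_1(\omega_2')|p_2(\omega_2')$; your argument, by truncating at the private prefixes, never reaches these tails. To repair your proof you would need to characterize $p_1(s)|p_2(s')$ correctly (distinguishing events shared by the two strings from events merely in $E_1\cap E_2$) and then extend your commutation lemma to push the private block $\eta$ past, or into, the suffix $a\,p_1(\omega_2)$ of $p_1(s)$ using $DC2$ with nonempty suffixes --- which is substantial additional work, not bookkeeping.
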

\begin{proof}
The antecedent of Lemma \ref{Gamma2} addresses three following
cases: (Case $1$): $s=\omega_1$, $s^{\prime} = \omega_1^{\prime}$;
(Case $2$): $s=\omega_1a\omega_2$, $s^{\prime} = \omega_1^{\prime}$,
and (Case $3$): $s=\omega_1a\omega_2$, $s^{\prime} =
\omega_1^{\prime}b\omega_2^{\prime}$, where, $\omega_1,
\omega_1^{\prime}\in [E\backslash (E_1\cap E_2)]^*$, $\omega_2,
\omega_2^{\prime}\in (E_1\cup E_2)^*$, $a,b \in E_1\cap E_2$.

For Case $1$, setting $K=1$, $r_k=0$, taking $p_1(\omega_1)= \alpha
= \alpha_0...\alpha_m \in (E_1\backslash E_2)^*$,
$p_2(\omega_1^{\prime}) = \beta^{\prime} =
\beta^{\prime}_0...\beta^{\prime}_{n^{\prime}}\in (E_2\backslash
E_1)^*$, $p_1(\omega_1^{\prime}) = \alpha^{\prime} =
\alpha_0^{\prime}...\alpha^{\prime}_{m^{\prime}} \in (E_1\backslash
E_2)^*$ and $p_2(\omega_1) = \beta = \beta_0...\beta_n \in
(E_2\backslash E_1)^*$, similar to the first part of Lemma
\ref{Gamma1}, it follows that $\delta(q_0,
p_1(\omega_1)|p_2(\omega^{\prime}_1))!$ and\\ $\delta(q_0,
p_1(\omega^{\prime}_1)|p_2(\omega_1))!$ in $A_S$.

For Case $2$, from Case $1$ and Lemma \ref{Gamma1} it follows that
$\delta(q_0, p_1(s)|p_2(s^{\prime}))!=
\delta(q_0,[p_1(\omega_1)|p_2(\omega^{\prime}_1)]\\
ap_1(\omega_2))!$ and $\delta(q_0, p_1(s^{\prime})|p_2(s))!=
\delta(q_0, [p_1(\omega^{\prime}_1)|p_2(\omega_1)]a p_2(\omega_2))!$
in $A_S$.

For the third case, from definition of parallel composition combined
with the first two cases and also Lemma \ref{Gamma1}, $\delta(q_0,
p_1(s)|p_2(s^{\prime}))!$ leads to
$\delta(q_0,[p_1(\omega_1)|p_2(\omega^{\prime}_1)]
a[p_1(\omega_2)|p_2(\omega_2)])!$ and
$\delta(q_0,[p_1(\omega_1)|\\p_2(\omega^{\prime}_1)]
b[p_1(\omega_2^{\prime})|p_2(\omega_2^{\prime})])!$ in $A_S$ and
similarly, $\delta(q_0, p_1(s^{\prime})|p_2(s))!$ results in
$\delta(q_0,
[p_1(\omega^{\prime}_1)|p_2(\omega_1)]\\a[p_1(\omega_2)|p_2(\omega_2)])!$
and $\delta(q_0,
[p_1(\omega^{\prime}_1)|p_2(\omega_1)]b[p_1(\omega_2^{\prime})|
p_2(\omega_2^{\prime}))])!$ in $A_S$.

Therefore, for all three cases, $\delta(q_0,
p_1(s)|p_2(s^{\prime}))!$ in $A_S$ and $\delta(q_0, p_1(s^{\prime})|
p_2(s))!$ in $A_S$.
\end{proof}

Now, Lemma \ref{reverse similarity relation of parallel} is proven
as follows.\\ \textbf{Sufficiency:} The set of transitions in
$P_1(A_S)||P_2(A_S)$, is defined as
$T=\{(x_0,y_0)\overset{p_1(s)|p_2(s^{\prime})}\longrightarrow
(x,y)\in Q_1 \times Q_2\}$, where,
$(x_0,y_0)\overset{p_1(s)|p_2(s^{\prime})}\longrightarrow (x,y)$ in
$P_1(A_S)||P_2(A_S)$ is the interleaving of transitions
$x_0\overset{p_1(s)}\longrightarrow x$ in $P_1(A_S)$ and a
transition $y_0\overset{p_2(s^{\prime})}\longrightarrow y$ in
$P_2(A_S)$ (projections of transitions
$q_0\overset{s}\longrightarrow q$ and
$q_0\overset{s^{\prime}}\longrightarrow q^{\prime}$, respectively,
in $A_S$). $T$ can be divided into three sets of transitions
corresponding to a division $\{\Gamma_1, \Gamma_2, \Gamma_3\}$ on
the set of interleaving strings
$\Gamma=\{p_1(s)|p_2(s^{\prime})|q_0\overset{s}\longrightarrow q,
q_0\overset{s^{\prime}}\longrightarrow q^{\prime}, q,q^{\prime}\in
Q, s, s^{\prime}\in E^* \}$, where,
$\Gamma_1=\{p_1(s)|p_2(s^{\prime})\in \Gamma |s=s^{\prime}\}$,
$\Gamma_2=\{p_1(s)|p_2(s^{\prime})\in \Gamma |s\neq s^{\prime},
p_{E_1\cap E_2}(s)$ and $p_{E_1\cap E_2}(s^{\prime})$ do not start
with the same event$\}$, and $\Gamma_3=\{p_1(s)|p_2(s^{\prime})\in
\Gamma |s\neq s^{\prime}, p_{E_1\cap E_2}(s)$ and $p_{E_1\cap
E_2}(s^{\prime})$ start with the same event $\}$.

Now, for any $s$, $s^{\prime}$, $\delta(q_0, s)!$, $\delta(q_0,
s^{\prime})!$, in $A_S$, both $\delta(q_0, p_1(s)|p_2(s^{\prime}))!$
and $\delta(q_0, p_1(s^{\prime})|p_2(s))!$ are guaranteed, for
$\Gamma_1$, due to Lemma \ref{Gamma1}; for $\Gamma_2$, due to Lemma
\ref{Gamma2}, and for $\Gamma_3$, due to combination of $DC3$ and
Lemma \ref{Gamma1} (For simplification, in $DC3$, $s$ and
$s^{\prime}$ can be started from any $q$, instead of $q_0$, and the
strings between $q_0$ and $q$ are checked by Lemma \ref{Gamma1}).

\textbf{Necessity:} The necessity is proven by contradiction.
Suppose that $A_S$ simulates $P_1(A_S)||P_2(A_S)$, but $\exists
e_1\in E_1\backslash E_2, e_2 \in E_2\backslash E_1, q\in Q, s\in
E^*$ s.t. $(1)$: $[\delta(q,e_1)!\wedge \delta(q,e_2)!]\wedge \neg
[\delta(q, e_1e_2)!\wedge \delta(q, e_2e_1)!]$; $(2)$:
$\neg[\delta(q, e_1e_2s)!\Leftrightarrow \delta(q, e_2e_1s)!]$, or
$(3)$: $\exists s, s^{\prime} \in E^*$, sharing the same first
appearing common event $a\in E_1 \cap E_2$, $s\neq s^{\prime}$,
$q\in Q$: $\delta(q, s)! \wedge \delta(q, s^{\prime})! \wedge
\neg[\delta(q, p_1(s)|p_2(s^{\prime}))! \wedge \delta(q,
p_1(s^{\prime})|p_2(s))!]$.

In the first case, due to definition of parallel composition, the
expression $[\delta(q,e_1)!\wedge \delta(q,e_2)!]$ leads to
$\delta_{||}(z, e_1e_2)!=\delta_{||}(z, e_2e_1)!$, where $z\in Q_1
\times Q_2$ in $P_1(A_S)||P_2(A_S)$ corresponds to $q\in Q$ in
$A_S$. Therefore, $\delta_{||}(z, e_1e_2)!\wedge\delta_{||}(z,
e_2e_1)!$, but, $\neg [\delta(q, e_1e_2)!\wedge \delta(q,
e_2e_1)!]$. This means that $P_1(A_S)||P_2(A_S)\nprec A_S$ which is
a contradiction. The second case means $\exists e_1\in E_1\backslash
E_2, e_2 \in E_2\backslash E_1, q\in Q, s\in E^*$ s.t. $[\delta(q,
e_1e_2s)! \vee \delta(q, e_2e_1s)!]\wedge \neg [\delta(q, e_1e_2s)!
\wedge \delta(q, e_2e_1s)!]$. From definition of parallel
composition, then $\delta(q, e_1e_2s)! \vee \delta(q, e_2e_1s)!$
implies that $\delta_{||}(z, e_1e_2)! =\delta_{||}(z, e_2e_1)!$, for
some $z\in Q_1 \times Q_2$ corresponding to $q\in Q$. Consequently,
from definition of transition relation and $A_S\prec
P_1(A_S)||P_2(A_S)$ it turns to $\delta_{||}(z, e_1e_2s)!
=\delta_{||}(z, e_2e_1s)!$, meaning that
 $\delta_{||}(z, e_1e_2s)! \wedge
\delta_{||}(z, e_2e_1s)!$, but, $\neg [\delta(q, e_1e_2s)! \wedge
\delta(q, e_2e_1s)!]$. This in turn contradicts with the similarity
assumption of $P_1(A_S)||P_2(A_S)\prec A_S$. The third case also
leads to the contradiction as causes the violation of the simulation
relation from $P_1(A_S)||P_2(A_S)$ into $A_S$ as $\delta(q, s)!
\wedge \delta(q, s^{\prime})!$ leads to $\delta_{||}(z,
p_1(s)|p_2(s^{\prime}))! \wedge \delta_{||}(z,
p_2(s)|p_1(s^{\prime}))!$ in $P_1(A_S)||P_2(A_S)$, whereas
$\neg[\delta(q, p_1(s)|p_2(s^{\prime}))! \wedge \delta(q,
p_2(s)|p_1(s^{\prime}))!]$.
\subsection{Proof for Lemma \ref{symmetric}}
To prove  Lemma \ref{symmetric}, we use following two lemmas
together with Lemma \ref{ParallelSimulation}. Firstly, the following
lemma is introduced to characterize the symmetric property of
simulation relations.
\begin{lemma}\label{symmetric simulations and determinism}
Consider two automata $A_1$ and $A_2$, and let $A_1$ be
deterministic, $A_1\prec A_2$ with the simulation relation $R_1$ and
$A_2\prec A_1$ with the simulation relation $R_2$. Then, $R_1^{-1} =
R_2$ if and only if there exists a deterministic automaton
$A^{\prime}_1$ such that $A_1^{\prime}\cong A_2$.
\end{lemma}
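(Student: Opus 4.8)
The plan is to read the single equality $R_1^{-1}=R_2$ as a bisimulation statement. Since $R_1$ is, by hypothesis, a simulation from $A_1$ to $A_2$, and $R_2$ is a simulation from $A_2$ to $A_1$, the equality $R_1^{-1}=R_2$ forces $R_1^{-1}$ to be a simulation from $A_2$ to $A_1$, i.e. it makes $R_1$ a symmetric bisimulation witnessing $A_1\cong A_2$. This observation settles the forward implication at once: if $R_1^{-1}=R_2$ then $A_1\cong A_2$, and since $A_1$ is already deterministic we may simply take $A_1':=A_1$, which is the required deterministic automaton bisimilar to $A_2$.

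For the converse I would proceed in two stages. First, I would promote the hypothesis to a global bisimilarity $A_1\cong A_2$: chaining the given simulations with the supplied bisimulation (relational composition of a simulation with a bisimulation is again a simulation) yields $A_1\prec A_1'$ from $A_1\prec A_2\cong A_1'$ and $A_1'\prec A_1$ from $A_1'\cong A_2\prec A_1$; as $A_1$ and $A_1'$ are both deterministic, mutual similarity collapses to $L(A_1)=L(A_1')$, hence $A_1\cong A_1'\cong A_2$. Second, and this is the real content, I would upgrade this existential fact to the pointwise equality of the given reachability-generated relations. The inclusion $R_1^{-1}\subseteq R_2$ I expect to hold unconditionally: if $(q_1,q_2)\in R_1$ then $q_1$ and $q_2$ are reached by a common string $s$, and because $A_1$ is deterministic $q_1=\delta_1(q_1^0,s)$ is the unique $A_1$-state for $s$, so $(q_2,q_1)$ lies in the reachability simulation $R_2$. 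For the reverse inclusion I would invoke the determinacy of $A_2$ up to bisimulation: because $A_2\cong A_1'$ with $A_1'$ deterministic, any two states of $A_2$ reachable by a common string are mutually bisimilar and therefore expose the same outgoing moves up to bisimilar targets. Hence, for $(q_2,q_1)\in R_2$ with witness $s$, every transition $\delta_2(q_2,e)=q_2'$ has $se\in L(A_2)=L(A_1)$, so the deterministic $A_1$ can follow it, and the matched pair lands back in $R_1$; this gives $R_2\subseteq R_1^{-1}$ and closes the equality.

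I expect the inclusion $R_2\subseteq R_1^{-1}$ to be the main obstacle, because it is exactly where ``spurious'' states of $A_2$ threaten the argument. Such states — for instance the state $z_5$ in Example~\ref{Undecomposable DC3-truncation}, reachable in $P_1(A_S)\|P_2(A_S)$ and related back into $A_S$ by $R_2$ but lying outside the image of $R_1$ — arise precisely from a nondeterministic branching of $A_2$ whose alternatives are \emph{not} bisimilar, and they force $R_2$ to be strictly larger than $R_1^{-1}$. The hypothesis that $A_2$ is bisimilar to a deterministic automaton is what rules these out: it guarantees that co-reachable states of $A_2$ are mutually bisimilar, so no branch of $A_2$ is ``extra'' relative to the deterministic $A_1$, and the two reachability simulations must coincide as inverses. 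A subsidiary point requiring care, already visible in Lemmas~\ref{similarity relation of parallel} and \ref{reverse similarity relation of parallel}, is that the argument must be confined to co-reachable pairs generated from the initial states; for arbitrary simulation relations padded with pairs of states never reached by a common string the equality can fail, so it is essential that $R_1$ and $R_2$ are taken to be the reachable relations built from $(q_1^0,q_2^0)$ and $(q_2^0,q_1^0)$.
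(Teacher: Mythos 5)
Your proof is correct, and it distributes the work differently from the paper's. For the implication $R_1^{-1}=R_2\Rightarrow\exists A_1'$, the paper argues by contradiction: if no deterministic automaton is bisimilar to $A_2$, then $A_2$ has two states $y_1,y_2$ reached by a common string $s$ with $\delta_2(y_1,\sigma)!$ but $\neg\delta_2(y_2,\sigma)!$; determinism of $A_1$ forces both to be $R_2$-related to the single state $x_1=\delta_1(x_0,s)$, and since $x_1$ admits $\sigma$ while $y_2$ does not, $(y_2,x_1)\in R_2$ yet $(x_1,y_2)\notin R_1$. You instead observe that $R_1^{-1}=R_2$ turns $R_1$ itself into a (symmetric) bisimulation, so one may simply take $A_1':=A_1$ --- shorter and entirely constructive. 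For the converse, both you and the paper chain the simulations through $A_1'$ and use determinism of $A_1$ and $A_1'$ to conclude $A_1\cong A_1'\cong A_2$; but the paper then merely asserts ``consequently $R_1^{-1}=R_2$,'' whereas you supply the missing two-inclusion argument for the reachability-generated relations, exploiting the fact that co-reachable states of $A_2$ are mutually bisimilar precisely because $A_2$ is bisimilar to a deterministic automaton. Your closing caveat --- that the equality can only be expected of the canonical relations built from the initial states, not of arbitrary simulation relations padded with unreachable pairs --- is a genuine precision that the paper leaves implicit. In effect, the branching analysis that the paper places in its contradiction argument reappears (in contrapositive form) inside your inclusion $R_2\subseteq R_1^{-1}$; both routes are sound, and yours closes a small gap in the paper's sufficiency direction.
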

\begin{proof}

\textbf{Sufficiency:} $A_1\prec A_2$, $A_2\prec A_1$ and
$A_1^{\prime}\cong A_2$, collectively, result in $A_1\prec
A_1^{\prime}$ and $A_1^{\prime}\prec A_1$, that due to determinism
of $A_1$ and $A_1^{\prime}$ lead to $A_1 \cong A_1^{\prime}$.
Finally, since $A_1^{\prime}\cong A_2$, from transitivity of
bisimulation, $A_1 \cong A_2$, and consequently, $R_1^{-1} = R_2$.

\textbf{Necessity:} The necessity is proven by contradiction as
follows. Consider two automata $A_1 = (X, x_0, E, \delta_1)$, $A_2 =
(Y, y_0, E, \delta_2)$, let $A_1$ be deterministic, $A_1\prec A_2$
with the simulation relation $R_1$, $A_2\prec A_1$ with the
simulation relation $R_2$ and suppose that $R_1^{-1} = R_2$ (and
hence $A_1\cong A_2$), but there does not exist a deterministic
automaton $A^{\prime}_1$ such that $A^{\prime}_1\cong A_2$. This
means that $\exists s\in E^*$, $\sigma\in E$, $y_1, y_2 \in Y$,
$\delta_2 (y_0, s) = y_1$, $\delta_2 (y_0, s) = y_2$, $\delta_2
(y_1, \sigma)!$, but $\neg\delta_2 (y_2, \sigma)!$. From $A_2 \prec
A_1$, $\delta_2 (y_0, s) = y_1 \wedge \delta_2 (y_1, \sigma)!$
implies that $\exists x_1\in X$, $\delta_1 (x_0, s) =x_1 \wedge
\delta_1(x_1, \sigma)!$. On the other hand, $A_1$ is deterministic,
and hence, $\forall x_2\in X$, $\delta_1(x_0, s) = x_2 \Rightarrow
x_2 = x_1$. Therefore, $A_2\prec A_1$ necessarily leads to $(y_2,
x_1)\in R_2$. But, $\exists \sigma \in E$ such that $\delta_1(x_1,
\sigma)! \wedge \neg\delta_2(y_2, \sigma)!$,  meaning that $(y_2,
x_1) \in R_2 \wedge \neg (x_1, y_2) \notin R_1$, i.e., $R_1^{-1}
\neq R_2$, that contradicts  with the hypothesis, and the necessity
is followed.
\end{proof}
Next, let $A_1$ and $A_2$ to be substituted by $A_S$ and
$P_1(A_S)||P_2(A_S)$, respectively, in Lemma \ref{symmetric
simulations and determinism}. Then, the existence of $A_1^{\prime} =
A_S^{\prime}$ in Lemma \ref{symmetric simulations and determinism}
is characterized by the following lemma.
\begin{lemma}\label{decomposition and determinism}
Consider a deterministic automaton $A_S$ and its natural projections
$P_i(A_S)$, $i = 1, 2$. Then, there exists a deterministic automaton
$A^{\prime}_S$ such that $A^{\prime}_S\cong P_1(A_S)||P_2(A_S)$ if
and only if there exist deterministic automata $P_i^{\prime}(A_S)$
such that $P_i^{\prime}(A_S) \cong P_i(A_S)$, $i = 1, 2$.
\end{lemma}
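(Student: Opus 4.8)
The plan is to establish the two implications by quite different means: the ``if'' direction is a short structural observation, while the ``only if'' direction carries essentially all of the difficulty.

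For sufficiency, assume deterministic automata $P_1^{\prime}(A_S)\cong P_1(A_S)$ and $P_2^{\prime}(A_S)\cong P_2(A_S)$ are given. I would first record the elementary fact that the parallel composition of two deterministic automata is again deterministic: reading off Definition~\ref{parallel composition}, from a state $(q_1,q_2)$ each event $e$ admits at most one successor, since the synchronous case is fixed by $\delta_1(q_1,e)$ and $\delta_2(q_2,e)$ and each private case by the single acting component. Hence $P_1^{\prime}(A_S)||P_2^{\prime}(A_S)$ is deterministic, and part~2 of Lemma~\ref{ParallelSimulation} yields $P_1^{\prime}(A_S)||P_2^{\prime}(A_S)\cong P_1(A_S)||P_2(A_S)$. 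Taking $A_S^{\prime}:=P_1^{\prime}(A_S)||P_2^{\prime}(A_S)$ then discharges this direction.

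For necessity I would use the characterization that an automaton is bisimilar to a deterministic automaton precisely when its bisimulation quotient is deterministic, equivalently when at every reachable state all successors under one and the same event are mutually bisimilar. Since $A_S^{\prime}$ is deterministic and $A_S^{\prime}\cong P_1(A_S)||P_2(A_S)$, the product is bisimilar to a deterministic automaton and hence itself enjoys this property; the goal is to push it down to each factor, after which the quotient of $P_i(A_S)$ furnishes the desired deterministic $P_i^{\prime}(A_S)$. Concretely, given a branching $p\xrightarrow{e}p_1$, $p\xrightarrow{e}p_2$ in $P_1(A_S)$, I would select a reachable product state $(p,c)$, lift the branching into the product, invoke the product property to conclude that the two product successors are bisimilar, and then try to descend this to the bisimilarity of $p_1$ and $p_2$ in $P_1(A_S)$ by assembling a bisimulation out of such pairs.

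The descent is where I expect the real obstacle, and it is exactly here that the hypothesis that $P_1(A_S)$ and $P_2(A_S)$ are projections of the \emph{same} deterministic $A_S$ becomes indispensable. Bisimilarity of two product states with equal second component does not, for arbitrary factors, force bisimilarity of their first components: a common event distinguishing $p_1$ from $p_2$ can be blocked by the companion state $c$ and so stay invisible in the product, and a small example with unrelated factors shows the implication genuinely failing. What repairs this is the projection-consistency inherited from the single $A_S$: every common event occurring in $P_1(A_S)$ occurs in $A_S$, hence in $P_2(A_S)$ as well, so any distinguishing common behaviour of $p_1$ and $p_2$ can be synchronized against an appropriately chosen reachable $P_2(A_S)$-state, and the shared origin $A_S$ guarantees that such a witnessing companion exists. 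I therefore anticipate the core of the argument to be a careful bookkeeping of reachable companion states, organized by a case split according to whether the behaviour separating $p_1$ from $p_2$ is private or common, with the common-event case resting on the synchronization-consistency just described and the determinism of $A_S$ used throughout to keep the companion states uniquely determined.
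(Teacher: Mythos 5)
Your sufficiency argument is correct and coincides with the paper's: determinism of each $P_i^{\prime}(A_S)$ makes $\delta_{||}^{\prime}$ a function, and Lemma~\ref{ParallelSimulation}.2 gives $P_1^{\prime}(A_S)||P_2^{\prime}(A_S)\cong P_1(A_S)||P_2(A_S)$. The genuine gap is in your necessity direction. You propose to push determinizability \emph{down} from the product to the factors: lift a branching $p\xrightarrow{e}p_1$, $p\xrightarrow{e}p_2$ of $P_1(A_S)$ to a reachable product state $(p,c)$, conclude $(p_1,c')\cong(p_2,c')$ from the product's property, and descend to $p_1\cong p_2$. You yourself identify that this descent fails for arbitrary factors because the companion $c$ can mask distinguishing common behaviour, and you then assert, without proof, that ``projection-consistency'' from the shared $A_S$ repairs it. That assertion is exactly the missing step, and it is not a routine one: the companion at which the branching is witnessed is fixed by how $(p,c)$ was reached, whereas the companion needed to expose a particular distinguishing string $v\in E_1^*$ from $p_1$ is determined by which states of $A_S$ realize that string; nothing in your sketch shows these two companions can be made to agree, and the phrases ``I expect'' and ``I anticipate'' concede that the core bookkeeping is not done. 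As written, the proposal proves only one direction of the lemma.

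It is worth noting that the paper avoids this obstacle entirely by arguing the contrapositive in the opposite direction: assuming some $P_i(A_S)$ is \emph{not} bisimilar to any deterministic automaton, it extracts a concrete witness in $A_S$ (states $q,q_1,q_2$, an event $e$, a string $t_2\in(E_2\backslash E_1)^*$ and a separating string $t$ with $\delta(q_1,t)!$ and $\neg\delta(q_2,t)!$) and lifts this nondeterministic branching \emph{up} into $P_1(A_S)||P_2(A_S)$ using the companion states $[q]_2,[q_1]_2,[q_2]_2$ supplied by $A_S$ itself. In that direction the prover chooses the companion, so the separating behaviour is guaranteed to survive in the product, which is precisely what your descent cannot guarantee. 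To salvage your route you would either need to prove the descent lemma with a careful choice of reachable companions (essentially reconstructing the paper's witness anyway), or switch to the contrapositive as the paper does.
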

\begin{proof}
Let $A_S = (Q, q_0, E = E_1\cup E_2, \delta)$, $P_i(A_S) = (Q_i,
q_0^i, E_i, \delta_i)$, $P_i^{\prime}(A_S) = (Q_i^{\prime},
q_{0,i}^{\prime}, E_i, \delta_i^{\prime})$, $i = 1, 2$,
$P_1(A_S)||P_2(A_S) = (Z, z_0, E, \delta_{||})$,
$P_1^{\prime}(A_S)||P_2^{\prime}(A_S) = (Z^{\prime}, z_0^{\prime},
E, \delta_{||}^{\prime})$. The proof of Lemma \ref{decomposition and
determinism} is then presented as follows.

\textbf{Sufficiency:} The existence of deterministic automata
$P_i^{\prime}(A_S)$ such that $P_i^{\prime}(A_S) \cong P_i(A_S)$, $i
= 1, 2$ implies that $\delta^{\prime}_1$ and $\delta^{\prime}_2$ are
functions, and consequently from definition of parallel composition
(Definition \ref{parallel composition}), $\delta_{||}^{\prime}$ is a
function, and hence $P_1^{\prime}(A_S)||P_2^{\prime}(A_S)$ is
deterministic. Moreover, from Lemma \ref{ParallelSimulation},
$P_i^{\prime}(A_S) \cong P_i(A_S)$, $i = 1, 2$ leads to
$P_1^{\prime}(A_S)||P_2^{\prime}(A_S)\cong P_1(A_S)||P_2(A_S)$,
meaning that there exists a deterministic automaton $A^{\prime}_S =
P_1^{\prime}(A_S)||P_2^{\prime}(A_S)$ such that $A^{\prime}_S \cong
P_1(A_S)||P_2(A_S)$.

\textbf{Necessity:} The necessity is proven by contraposition,
namely, by showing that if there does not exist deterministic
automata $P_i^{\prime}(A_S)$ such that $P_i^{\prime}(A_S) \cong
P_i(A_S)$, for $i = 1$ or $i = 2$, then there does not exist a
deterministic automaton $A^{\prime}_S$ such that $A^{\prime}_S \cong
P_1(A_S)||P_2(A_S)$.

Without loss of generality, assume that there does not exist a
deterministic automaton $P_1^{\prime}(A_S)$ such that
$P_1^{\prime}(A_S) \cong P_1(A_S)$. This means that $\exists q, q_1,
q_2 \in Q$, $e\in E_1$, $t_2\in (E_2\backslash E_1)^*$, $t\in E^*$,
$\delta(q, t_2e) = q_1$, $\delta(q, e) = q_2$, $\neg(\delta(q_1, t)
= q_2\Leftrightarrow \delta(q_2, t)!)$, meaning that $\delta(q_1,
t)!\wedge \neg\delta(q_2, t)!$ or $\neg\delta(q_1, t)!\wedge
\delta(q_2, t)!$. Again without loss of generality we consider the
first case and show that it leads to a contradiction. From the first
case, $\delta(q_1, t)!\wedge \neg\delta(q_2, t)!$, and definition of
natural projection, it follows that $\delta_1([q]_1, e) = [q_1]_1$,
$\delta_1([q_1]_1, p_1(t))!$, $\delta_1([q]_1, e) = [q_2]_1$,
$\neg\delta_1([q_2]_1, p_1(t))!$, $\delta_2([q]_2, p_2(e)) =
[q_2]_2$, $\neg\delta_2([q_2]_2, p_2(t))!$, and hence,
$\delta_{||}(([q]_1, [q]_2), e) = ([q_1]_1, [q_1]_2)$,
$\delta_{||}(([q_1]_1, [q_1]_2),\\ p_1(t))!$, whereas
$\delta_{||}(([q]_1, [q]_2), e) = ([q_1]_1, [q_2]_2)$,
$\neg\delta_{||}(([q_1]_1, [q_2]_2), p_1(t))!$ in
$P_1(A_S)||P_2(A_S)$, implying that there does not exist a
deterministic automaton $A^{\prime}_S$ such that $A^{\prime}_S \cong
P_1(A_S)||P_2(A_S)$, and the necessity is proven.
\end{proof}

Now, Lemma \ref{symmetric} is proven as follows.

\textbf{Sufficiency:} $DC4$ implies that there exist deterministic
automata $P_i^{\prime}(A_S)$ such that $P_i^{\prime}(A_S) \cong
P_i(A_S)$, $i = 1, 2$. Then, from Lemmas \ref{ParallelSimulation}
and \ref{decomposition and determinism}, it follows, respectively,
that $P_1^{\prime}(A_S)||P_2^{\prime}(A_S)\cong P_1(A_S)||P_2(A_S)$,
and that there exists a deterministic automaton $A^{\prime}_S =
P_1^{\prime}(A_S)||P_2^{\prime}(A_S)$ such that $A^{\prime}_S \cong
P_1(A_S)||P_2(A_S)$ that due to Lemma \ref{symmetric simulations and
determinism}, it results in $R_1^{-1} = R_2$.

\textbf{Necessity:} Let $A_S$ be deterministic, $A_S\prec
P_1(A_S)||P_2(A_S)$ with the simulation relation $R_1$ and
$P_1(A_S)||P_2(A_S)\prec A_S$ with the simulation relation $R_2$,
and assume by contradiction that $R_1^{-1} = R_2$, but $DC4$ is not
satisfied. Violation of $DC4$ implies that for $i = 1$ or $i = 2$,
there does not exists a deterministic automaton $P_i^{\prime}(A_S)$
such that $P_i^{\prime}(A_S) \cong P_i(A_S)$. Therefore, due to
Lemma \ref{decomposition and determinism}, there does not exist a
deterministic automaton $A^{\prime}_S$ such that $A^{\prime}_S \cong
P_1(A_S)||P_2(A_S)$, and hence, according to Lemma \ref{symmetric
simulations and determinism}, it leads to $R_1^{-1} \neq R_2$ which
is a contradiction.

\subsection{Proof for Lemma \ref{associativity-parallel}}
Lemma \ref{associativity-parallel} comes from the associativity
property of parallel decomposition \cite{Cassandras2008} as for
automata $A_i$, $i = 1,...,m$: $A_ 1\parallel A_ 2 \parallel \cdots
\parallel
A_{m-1}\parallel A_{E_m} = A_1\parallel \left(A_2
\parallel \left( \cdots \parallel \left( A_{m-1}\parallel
A_m\right)\right)\right)$.
\subsection{Proof for Lemma \ref{ParallelSimulation}}
Lemma \ref{ParallelSimulation}.1 is proven by showing that the
relation $R = \{ ((q_1,q_3), (q_2,q_4)) | (q_1,q_2) \in R_1 \hbox{
and }\\ (q_3,q_4) \in R_2 \}$ is a simulation relation, where, $R_1$
and $R_2$ are the respective simulations from $A_1$ to $A_2$ and
from $A_3$ to $A_4$.

 Consider $A_i=(Q_i, q_i^0, E_i, \delta_i)$, $i=1,...,4$,
 $A_1||A_3 = (Q_{1,3}, (q_1^0, q_3^0), E = E_1\cup E_3,
 \delta_{1,3})$,  $A_2||A_4 = (Q_{2,4}, (q_2^0, q_4^0), E = E_2\cup E_4,
 \delta_{2,4})$, $E_1 = E_3$ and $E_2 = E_4$. Then, $\forall (q_1, q_3), (q_1, q_3)^{\prime}\in Q_{1,3}$, $e\in
 E$, $q_2\in Q_2$, $q_4\in Q_4$ such that
 $\delta_{1,3}((q_1, q_3), e) = (q_1, q_3)^{\prime}$, $(q_1, q_2)\in R_1$ and $(q_3, q_4)\in
 R_2$, according to definition of parallel composition (Definition
\ref{parallel composition}), we have $ (q_1, q_3)^{\prime} =
(q_1^{\prime}, q_3^{\prime}) = \left\{
\begin{array}{ll}
    \left(\delta_1(q_1, e), \delta_3(q_3, e)\right), & \hbox{if $\delta_1(q_1, e)!$, $\delta_3(q_3, e)!$, $
     e\in E_1 \cap E_3$};\\
    \left(\delta_1(q_1, e), q_3\right), & \hbox{if $\delta_1(q_1, e)!$, $e\in E_1 \backslash E_3$;} \\
    \left(q_1, \delta_3(q_3, e)\right), & \hbox{if $\delta_3(q_3, e)!$, $e\in E_3 \backslash E_1$;}
\end{array}\right.$\\ and due to definition of simulation (Definition
 \ref{simulation}), $A_1 \prec A_2$  and $A_3 \prec A_4$, it follows
 that
 $\left\{
\begin{array}{ll}
    \exists q_i^{\prime}\in Q_i, \delta_i(q_i,
    e)= q_i^{\prime}, i = 2, 4& \hbox{ if $e\in E_2\cap E_4$}\\
    \exists q_2^{\prime}\in Q_2, \delta_2(q_2,
    e)= q_2^{\prime}& \hbox{ if $e\in E_2\backslash E_4$}\\
        \exists q_4^{\prime}\in Q_4,  \delta_4(q_4,
    e)= q_4^{\prime}& \hbox{ if $e\in E_4\backslash E_2$}
\end{array}\right.$

This, in turn, due to definition of parallel composition implies
that $\exists (q_2^{\prime}, q_4^{\prime}) \in Q_{2,4}$ such that
$\delta_{2,4}((q_2, q_4), e) =  (q_2, q_4)^{\prime} = \left\{
\begin{array}{ll}
    \left(\delta_2(q_2, e), \delta_4(q_4, e)\right), & \hbox{if $
     e\in E_2 \cap E_4$};\\
    \left(\delta_2(q_2, e), q_4\right), & \hbox{if $e\in E_2 \backslash E_4$;} \\
    \left(q_2, \delta_4(q_4, e)\right), & \hbox{if $e\in E_4 \backslash E_2$.}
\end{array}\right.$

Therefore, $\forall (q_1, q_3), (q_1, q_3)^{\prime}\in Q_{1,3}$,
$(q_2, q_4) \in Q_{2,4}$, $e\in
 E$, such that $\delta_{1,3}((q_1, q_3), e) = (q_1, q_3)^{\prime}$ and $((q_1, q_3), (q_2, q_4))\in R$,
 then $\exists (q_2, q_4)^{\prime}\in
 Q_{2,4}$, $\delta_{2,4}((q_2, q_4), e) = (q_2, q_4)^{\prime}$, $((q_1, q_3)^{\prime}, (q_2, q_4)^{\prime})\in
 R$. This together with $((q_1^0, q_3^0), (q_2^0, q_4^0))\in R$, by
construction, leads to $A_1||A_3 \prec A_2||A_4$.

Now, to prove Lemma $\ref{ParallelSimulation}.2$, we define the
relation $\bar{R} = \{ ((q_2,q_4), (q_1,q_3)) | (q_2$, $q_1) \in
\bar{R}_1 \hbox{ and }\\ (q_4,q_3) \in \bar{R}_2 \}$, where,
$\bar{R}_1$ and $\bar{R}_2$ are the respective simulation relations
from $A_2$ to $A_1$ and from $A_4$ to $A_3$, and then similar to the
proof of the first part, we show that $\bar{R}$ is a simulation
relation. Now, to show that $A_1||A_3 \cong A_2||A_4$ it remains to
show that $\forall (q_1, q_3)\in Q_{1,3}$, $(q_2, q_4)\in Q_{2,4}$:
$((q_1,q_3), (q_2,q_4)) \in R \Leftrightarrow ((q_2,q_4),
(q_1,q_3))\in \bar{R}$. This is proven by contradiction. Suppose
that $\exists (q_1, q_3)\in Q_{1,3}$, $(q_2, q_4)\in Q_{2,4}$ such
that $((q_1,q_3), (q_2,q_4)) \in R \wedge ((q_2,q_4),
(q_1,q_3))\notin \bar{R}$, or $((q_2,q_4), (q_1,q_3)) \in \bar{R}
\wedge ((q_1,q_3), (q_2,q_4))\notin R$. We prove that the first
hypothesis leads to contradiction, and the contradiction of the
second hypothesis is followed, similarly. The expression
$((q_1,q_3), (q_2,q_4)) \in R\ \wedge\ ((q_2,q_4), (q_1,q_3))\notin
\bar{R}$ means that $\exists s\in E^*$, $\delta_{1,3}((q_1^0,
q_3^0), s) = (q_1,q_3)$, $\delta_{2,4}((q_2^0, q_4^0), s) =
(q_2,q_4)$, $\forall e\in E$, $\delta_{1,3}((q_1, q_3), e)!$:
$\delta_{2,4}((q_2, q_4), e)!$; but, $\exists \sigma \in E$,
$\delta_{2,4}((q_2, q_4), \sigma)! \wedge \neg \delta_{1,3}((q_1,
q_3), \sigma)!$. From Definition \ref{parallel composition},
$\delta_{2,4}((q_2, q_4), \sigma)!$ means that\\ $\left\{
\begin{array}{ll}
   \delta _2(q_2,\sigma)!, \delta _4(q_4, \sigma)! & \hbox{if $e\in E_2 \cap E_4$};\\
   \delta _2(q_2,\sigma)! & \hbox{if $e\in E_2 \backslash E_4$};\\
   \delta _4(q_4, \sigma)! & \hbox{if $e\in E_4 \backslash E_2$}.
\end{array}\right.$\\
Consequently, from $(q_2, q_1)\in \bar{R}_1$ and $E_1 = E_2$ (due to
$A_1 \cong A_2$), $(q_4, q_3)\in \bar{R}_2$ and $E_3 = E_4$ (due to
$A_3 \cong A_4$), and Definition \ref{simulation}, it follows  that
$\left\{
\begin{array}{ll}
    \delta _1(q_1,\sigma)!, \delta _3(q_3, \sigma)! & \hbox{if $e\in E_2 \cap E_4 = E_1 \cap E_3$};\\
    \delta _1(q_1,\sigma)! & \hbox{if $e\in E_2 \backslash E_4 = E_1 \backslash E_3$};\\
    \delta _3(q_3, \sigma)! & \hbox{if $e\in E_4 \backslash E_2 = E_3 \backslash E_1$},
\end{array}\right.$ that from Definition \ref{parallel composition}
leads to $\delta_{1, 3}((q_1, q_3), \sigma)!$ which contradicts with
the hypothesis and the proof is followed.

\bibliographystyle{IEEEtran}
\bibliography{automatica2010_3}

\begin{thebibliography}{10}
\providecommand{\url}[1]{#1}
\csname url@rmstyle\endcsname
\providecommand{\newblock}{\relax}
\providecommand{\bibinfo}[2]{#2}
\providecommand\BIBentrySTDinterwordspacing{\spaceskip=0pt\relax}
\providecommand\BIBentryALTinterwordstretchfactor{4}
\providecommand\BIBentryALTinterwordspacing{\spaceskip=\fontdimen2\font plus
\BIBentryALTinterwordstretchfactor\fontdimen3\font minus
  \fontdimen4\font\relax}
\providecommand\BIBforeignlanguage[2]{{%
\expandafter\ifx\csname l@#1\endcsname\relax
\typeout{** WARNING: IEEEtran.bst: No hyphenation pattern has been}%
\typeout{** loaded for the language `#1'. Using the pattern for}%
\typeout{** the default language instead.}%
\else
\language=\csname l@#1\endcsname
\fi
#2}}

\bibitem{Lesser1999}
V.~R. Lesser, ``Cooperative multiagent systems: A personal view of the state of
  the art,'' \emph{IEEE Transactions on Knowledge and Data Engineering},
  vol.~11, pp. 133--142, 1999.

\bibitem{Lima2005}
P.~U. Lima and L.~M. Custódio, \emph{Multi-Robot Systems, Book Series Studies
  in Computational Intelligence, Book Innovations in Robot, Mobility and
  Control}.\hskip 1em plus 0.5em minus 0.4em\relax Berlin: Springer Berlin /
  Heidelberg, 2005, vol.~8.

\bibitem{Choi2009}
J.~Choi, S.~Oh, and R.~Horowitz, ``Distributed learning and cooperative control
  for multi-agent systems,'' \emph{Automatica}, vol.~45, no.~12, pp.
  2802--2814, 2009.

\bibitem{Kazeroon2009}
E.~Semsar-Kazerooni and K.~Khorasani, ``Multi-agent team cooperation: A game
  theory approach,'' \emph{Automatica}, vol.~45, no.~10, pp. 2205--2213, 2009.

\bibitem{Georgios2009}
G.~E. Fainekos, A.~Girard, H.~Kress-Gazit, and G.~J. Pappas, ``Temporal logic
  motion planning for dynamic robots,'' \emph{Automatica}, vol.~45, no.~2, pp.
  343--352, 2009.

\bibitem{Ji2009}
Z.~Ji, Z.~Wang, H.~Lin, and Z.~Wang, ``Brief paper: Interconnection topologies
  for multi-agent coordination under leader-follower framework,''
  \emph{Automatica}, vol.~45, no.~12, pp. 2857--2863, 2009.

\bibitem{Crespi2008}
V.~Crespi, A.~Galstyan, and K.~Lerman, ``Top-down vs bottom-up methodologies in
  multi-agent system design,'' \emph{Auton. Robots}, vol.~24, no.~3, pp.
  303--313, 2008.

\bibitem{Tabuada2006}
P.~Tabuada and G.~Pappas, ``Linear time logic control of discrete-time linear
  systems,'' \emph{Automatic Control, IEEE Transactions on}, vol.~51, no.~12,
  pp. 1862--1877, Dec. 2006.

\bibitem{Belta2007}
C.~Belta, A.~Bicchi, M.~Egerstedt, E.~Frazzoli, E.~Klavins, and G.~Pappas,
  ``Symbolic planning and control of robot motion [grand challenges of
  robotics],'' \emph{Robotics and Automation Magazine, IEEE}, vol.~14, no.~1,
  pp. 61--70, March 2007.

\bibitem{Kloetzer2010}
M.~Kloetzer and C.~Belta, ``Automatic deployment of distributed teams of robots
  from temporal logic motion specifications,'' \emph{Robotics, IEEE
  Transactions on}, vol.~26, no.~1, pp. 48 --61, feb. 2010.

\bibitem{Cassandras2008}
C.~G. Cassandras and S.~Lafortune, \emph{Introduction to discrete event
  systems}.\hskip 1em plus 0.5em minus 0.4em\relax USA: Springer, 2008.

\bibitem{Kumar1999}
R.~Kumar and V.~K. Garg, \emph{Modeling and Control of Logical Discrete Event
  Systems}.\hskip 1em plus 0.5em minus 0.4em\relax Norwell, MA, USA: Kluwer
  Academic Publishers, 1999.

\bibitem{Gotzhein1992}
R.~Gotzhein, ``Temporal logic and applications: a tutorial,'' \emph{Comput.
  Netw. ISDN Syst.}, vol.~24, no.~3, pp. 203--218, 1992.

\bibitem{Wolper2002}
P.~Wolper, ``Constructing automata from temporal logic formulas: a tutorial,''
  pp. 261--277, 2002.

\bibitem{Mukund2002}
M.~Mukund, \emph{From global specifications to distributed implementations, in
  B. Caillaud, P. Darondeau, L. Lavagno (Eds.), \emph{Synthesis and Control of
  Discrete Event Systems}, Kluwer}.\hskip 1em plus 0.5em minus 0.4em\relax
  Berlin: Springer Berlin / Heidelberg, 2002.

\bibitem{Morin1998}
R.~Morin, ``Decompositions of asynchronous systems,'' in \emph{CONCUR '98:
  Proceedings of the 9th International Conference on Concurrency Theory}.\hskip
  1em plus 0.5em minus 0.4em\relax London, UK: Springer-Verlag, 1998, pp.
  549--564.

\bibitem{Castellani1999}
I.~Castellani, M.~Mukund, and P.~S. Thiagarajan, ``Synthesizing distributed
  transition systems from global specification,'' in \emph{Proceedings of the
  19th Conference on Foundations of Software Technology and Theoretical
  Computer Science}.\hskip 1em plus 0.5em minus 0.4em\relax London, UK:
  Springer-Verlag, 1999, pp. 219--231.

\bibitem{Willner1991}
Y.~Willner and M.~Heymann, ``Supervisory control of concurrent discrete-event
  systems,'' \emph{International Journal of Control}, vol.~54, pp. 1143--1169,
  1991.

\bibitem{Zhou2006}
C.~Zhou, R.~Kumar, and S.~Jiang, ``Control of nondeterministic discrete-event
  systems for bisimulation equivalence,'' \emph{Automatic Control, IEEE
  Transactions on}, vol.~51, no.~5, pp. 754 -- 765, may 2006.

\bibitem{Alur2000}
R.~Alur, T.~Henzinger, G.~Lafferriere, and G.~Pappas, ``Discrete abstractions
  of hybrid systems,'' \emph{Proceedings of the IEEE}, vol.~88, no.~7, pp. 971
  --984, jul 2000.

\bibitem{Lawson07}
M.~V. Lawson, \emph{Finite Automata, Book Series in Control Engineering, Book
  Handbook of Networked and Embedded Control System}.\hskip 1em plus 0.5em
  minus 0.4em\relax Springer Boston / Birkhäuseg, 2007, vol.~I.

\bibitem{Rudie1992}
K.~Rudie and W.~Wonham, ``Think globally, act locally: decentralized
  supervisory control,'' \emph{Automatic Control, IEEE Transactions on},
  vol.~37, no.~11, pp. 1692 --1708, nov 1992.

\bibitem{Tabuada2003}
P.~Tabuada and G.~Pappas, ``From discrete specifications to hybrid control,''
  in \emph{Decision and Control, 2003. Proceedings. 42nd IEEE Conference on},
  vol.~4, Dec. 2003, pp. 3366--3371 vol.4.

\bibitem{Belta2005}
C.~Belta, V.~Isler, and G.~Pappas, ``Discrete abstractions for robot motion
  planning and control in polygonal environments,'' \emph{Robotics, IEEE
  Transactions on}, vol.~21, no.~5, pp. 864 -- 874, oct. 2005.

\end{thebibliography}
\end{document}